\let\Oldendproof\endproof%
\def\endproof{\qed\Oldendproof}%
\let\Oldendremark\endremark%
\def\endremark{\hfill\ensuremath{\diamondsuit}\Oldendremark}%
\let\Oldendexample\endexample%
\def\endexample{\hfill\ensuremath{\triangle}\Oldendexample}%
\spnewtheorem{sublemma}{Claim}[sublemmacounter]{\itshape}{\rmfamily}
\numberwithin{sublemma}{lemma}
\spnewtheorem*{subproof}{Proof}{\itshape}{\rmfamily}
\let\Oldendsubproof\endsubproof
\def\endsubproof{\hfill\ensuremath{\Diamond}\Oldendsubproof}%
\newcommand{\locations}{\ensuremath{V}}
\newcommand{\aloc}{\ensuremath{loc}}
\newcommand{\acnum}{\ensuremath{\Delta x}}
\newcommand{\loadd}{\ensuremath{x_m}}
\newcommand{\driver}{\ensuremath{driv}}
\newcommand{\ndriver}{\ensuremath{k}}
\newcommand{\orig}{\ensuremath{orig}}
\newcommand{\origin}{\ensuremath{orig}}
\newcommand{\dest}{\ensuremath{dest}}
\newcommand{\tdep}{\ensuremath{dep}}
\newcommand{\tarr}{\ensuremath{arr}}
\newcommand{\mloadd}{\ensuremath{\ell oad}}
\newcommand{\ntourd}{\ensuremath{n}}
\newcommand{\taskset}{\ensuremath{\mathcal{T}}}
\newcommand{\abs}[1]{\ensuremath{\lvert #1 \rvert}}
\newcommand{\NN}{\ensuremath{\mathbb{N}}}
\newcommand{\ZZ}{\ensuremath{\mathbb{Z}}}
\newcommand{\RR}{\ensuremath{\mathbb{R}}}
\newcommand{\capd}{\ensuremath{L}}
\newcommand{\tourd}{\ensuremath{\Gamma}}
\newcommand{\action}{\ensuremath{\alpha}}
\newcommand{\aexetime}{\ensuremath{t}}
\newcommand{\move}{\ensuremath{m}}
\newcommand{\dist}{\ensuremath{d}}
\newcommand{\tact}{\ensuremath{act}}
\newcommand{\tmov}{\ensuremath{mov}}
\newcommand{\costd}{\ensuremath{\text{cost}^{\text{d}}}}
\newcommand{\costc}{\ensuremath{\text{cost}^{\text{c}}}}
\newcommand{\fc}{\ensuremath{f}}
\newcommand{\fd}{\ensuremath{F}}
\newcommand{\bd}{\ensuremath{B}}
\newcommand{\T}{\ensuremath{\mathcal{T}}}
\newcommand{\capacity}{\ensuremath{\text{cap}}}
\newcommand{\sched}{\mathcal{S}}
\newcommand{\schedule}{\sched}
\newcommand{\TR}{\mathrm{TR}}
\newcommand{\z}{\boldsymbol{z}}
\newcommand{\zd}{\boldsymbol{Z}^D} 
\newcommand{\zzd}{Z^D} 
\newcommand{\REOPT}{\textsc{ReOpt}\xspace}
\newcommand{\PDPINSERT}{\textsc{Pdp-Insert}\xspace}
\newcommand{\VO}{V_O}
\newcommand{\VU}{V_U}
\newcommand{\VD}{V_D}
\newcommand{\Vpick}{{V^+}}
\newcommand{\Vdrop}{{V^-}}
\newcommand{\Vbal}{{V^=}}
\newcommand{\NPhard}{\ensuremath{\mathcal{NP}}\hbox{-}hard}
\title{ReOpt: an Algorithm with a Quality Guaranty for Solving the Static Relocation Problem}
\author{Sahar Bsaybes\inst{2}${}^{\star}$
        \and
        Sven~O.~Krumke\inst{1}
        \and
        Alain Quilliot\inst{2}
        \and
        Annegret K.\ Wagler\inst{2}
        \and
        Jan-Thierry Wegener\inst{2}\thanks{This work was founded by the French National Research Agency, the European Commission (Feder funds) and the R\'egion Auvergne within the LabEx IMobS3.}}
\institute{University of Kaiserslautern (Department of Mathematics) \\
           Kaiserslautern, Germany\\
           \texttt{krumke@mathematik.uni-kl.de}
          \and
          Universit\'e Blaise Pascal (Clermont-Ferrand II)\\
          Laboratoire d'Informatique, de Mod\'elisation et d'Optimisation des Syst\`emes\\
          BP 10125, 63173 Aubi\`ere Cedex, France\\
        \texttt{\{bsaybes,quilliot,wagler,wegener\}@isima.fr}}
\begin{document}

\maketitle

\begin{abstract}
In a carsharing system, a fleet of cars is distributed at stations in an urban area, customers can take and return cars at any time and station.
For operating such a system in a satisfactory way, the stations have to keep a good ratio between the numbers of free places and cars in each station.
This leads to the problem of relocating cars between stations, which can be modeled within the framework of a metric task system. 
In this paper, we focus on the Static Relocation Problem, where the system has to be set into a certain state, outgoing from the current state.  
We present a combinatorial approach and provide approximation factors for several different situations.
\end{abstract}


\section{Introduction}

Carsharing is a modern way of car rental, attractive to customers who make only occasional use of a car on demand.
Carsharing contributes to sustainable transport as less car intensive means of urban transport, and an increasing number of cities all over the world establish(ed) such services.
Hereby, a fleet of cars is distributed at specified stations in an urban area, customers can take a car at any time and station and return it at any time and station,
provided that there is a car available at the start station and a free place at the final station.
To ensure the latter, the stations have to keep a good ratio between the number of places and the number of cars in each station.
This leads to the problem of balancing the load of the stations, called \emph{Relocation Problem}:
an operator has to monitor the load situations of the stations and to decide when and how to move cars from ``overfull'' stations to ``underfull'' ones.

Balancing problems of this type occur for any car- or bikesharing system, but the scale of the instances, 
the time delay for prebookings and the possibility to move one or more vehicles in balancing steps differ. 
We consider an innovative carsharing system, where the cars are partly autonomous, which allows to build wireless convoys of cars leaded by a special vehicle,
such that the whole convoy is moved by only one driver (cf.~\cite{EDGC:2012:PCS}).

This setting is different from usual carsharing, but similar to bikesharing, where trucks can simultaneously load and move several bikes during the relocation process \cite{Benchimol+etal:RAIRO,do-cmc2013,cirrelt-CMR-2012}.
The main goal is to guarantee a balanced system during working hours (dynamic situation as in \cite{cirrelt-CMR-2012,HKQWW:2015:ODY,HKQWW:LAGOS:2015,KQWW:2014:LNCS,LAGOS2013})
or to set up an appropriate initial state for the morning (static situation as in \cite{do-cmc2013,LAGOS2013}).
Both, the dynamic and the static versions are known to be \NPhard~\cite{Ball+etal:handbook:95a,Ball+etal:handbook:95b,do-cmc2013,Nemhauser+etal:handbook:89}, and different heuristic approaches have been developed, e.g., 
partitioning the problem into subproblems by
discrete particle swarm optimization \cite{CIE:GKA-2013},
by reducing the search space~\cite{KQWW:2014:LNCS},
using variable neighborhood search \cite{EvoCOP:HPHR-2013}, or
partitioning the problem into subproblems with clustering techniques \cite{SHH-2013}.


In this paper, we address the static situation, where the system has to be set into a certain state~$\z^T$, outgoing from the current state~$\z^0$, within a given time horizon.~$[0, T]$ 

In Section \ref{sec: model}, we model different variants of the Static Relocation Problem within the framework of a (quasi) metric task system,
where tasks consist in moving cars out of ``overfull'' stations (with $z^0_v > z^T_v$) into ``underfull'' stations (with $z^0_v < z^T_v$).

In the following, we present two approaches to solve the Static Relocation Problem, an exact integer programming based method and a combinatorial approximation algorithm.

In Section~\ref{sec: static: min-cost flows: wo pre w back}, we present an exact approach to solve this problem using coupled flows in time-expanded networks,
where the flow of cars is dependent from the flow of drivers since cars can only be moved in convoys.
Due to the involved flow coupling constraint, all variants of the problem require very long computation times (see also Section~\ref{seq: computational})
which motivated the study of a combinatorial algorithm to solve the problem in reasonably short time.

In Section~\ref{sec: static: reopt}, we propse such an algorithm that firstly matches tasks to generate transport requests,
subsequently solves a Pickup and Delivery Problem, and iteratively augments the transport requests and resulting tours (Section~\ref{sec: static: reopt}). 
This algorithm has an approximation factor based on the convoy size in several different situations (Section~\ref{sec: static: reopt: approximation factor}).
Parts of these results appeared without proofs in~\cite{LAGOS2013}.
 
We finally provide some computational results for both approaches (Section~\ref{seq: computational}), and close with some remarks and future lines of research (Section~\ref{seq: conclusion}).


\section{Problem description and model}
\label{sec: model}

We model the Relocation Problem in the framework of a metric task system. 

By \cite{LAGOS2013}, the studied carsharing system can be understood as a discrete event-based system, where
the system components are the stations $v_1, \ldots, v_n$, each having an individual capacity cap$(v_i)$,
a system state $\z^t \in \ZZ^n$ specifies for each station $v_i$ the number of cars $z^t_i$ at a time point $t \leq T$ within a time horizon~$[0, T]$ 
and $\z^t$ changes when customers or convoy drivers take or return cars at stations.

An operator monitors the evolution of system states over time and decides when and how to move cars
between the stations,
in order to avoid infeasible system states $\z^t$ with $z^t_v > \capacity(v_i)$ or $z^t_v < 0$ for some station $v$.
More precisely, a \emph{task} is defined by $\tau = (v, x)$ where $x \in \ZZ \setminus \{0\}$ cars are to pickup (if $x > 0$) or to deliver (if $x < 0$) at station $v$ within the time-horizon~$[0, T]$.

In the static situation, we consider a \emph{start state}~$\z^0$ and a \emph{destination state}~$\z^T$.
Then the tasks are induced by the differences between these two states, i.e, for every station $v \in V$ with $z^0_v \neq z^T_v$ we have a task~$(v, z^0_v - z^T_v)$.
A station $v_o$ with $z^0_{v_o} - z^T_{v_o} > 0$ is called an~\emph{overfull station}, a station~$v_u$ with $z^0_{v_u} - z^T_{v_u} > 0$ is called an~\emph{underfull station},
and a station $v_b$ with $z^0_{v_o} - z^T_{v_u} = 0$ is called an~\emph{balanced station}.

Since our considered carsharing system is located in an urban area, due to one-way streets, the distances between two stations~$v$ and~$v'$ may not be symmetric, i.e.,
the distance from~$v$ to~$v'$ 
may differ from that from~$v'$ to~$v$.
Furthermore, due to the traffic situation, the travel times also may vary.
In order to represent the one-way streets (or different travel times) it is suitable to encode the urban area where the carsharing system is running as a
\emph{(quasi) metric space}\footnote{A quasi metric space $(M,d)$ contains a set $M$ and a distance function $d: M \times M \to \RR$ so that for all $x,y,z \in M$ holds:
$(i)$ $d(x,y) \geq 0$ (non-negativity), $(ii)$ $d(x, y) = 0$ iff $x = y$ (identity of indiscernibles), $(iii)$ $d(x,z) \leq d(x,y) + d(y,z)$ (triangle inequality).
Laxly said, a quasi metric is a ``metric'' where the symmetry condition is dropped.}
$M=(V,d)$ induced by a directed weighted graph $G = (V,A,\capacity,w)$, where the nodes correspond to stations, arcs to their physical links in the urban area, node weights to the station's capacities,
and the edge weights $w : E \rightarrow \RR_+$ determine the distance $d$ between two points $v,v' \in V$ as length of a shortest directed path from~$v$ to~$v'$.

All drivers begin and end their work at the same location, a so-called depot.
A depot is represented in $V$ by a distinguished origin $v_D \in V$.
In larger carsharing systems, there are usually several depots~$V_D$ distributed within the urban area.
Especially when a driver lives near a depot, he may prefer to start and end its tour in the same depot.
Furthermore, this ensures that always the same number of trucks are located in the depots.
However, in order to decrease the total distances traveled by the drivers, it may be suitable for the operators of the carsharing system that the drivers start and end their tours in different depots.
The number of drivers starting in depot~$v_D \in \VD$ is represented by a natural number~$\zzd_{v_D}$; the total number of available drivers is denoted by~$k = \abs{\zd}$.

This together yields a \emph{(quasi) metric task system}, a pair $(M, {\cal T})$ where $M = (V, d)$ is the above metric space and ${\cal T}$ a set of tasks, as suitable framework to embed the tours for the convoys. 
A driver able to lead a convoy plays the role of a server.
Each server has capacity~$L$, corresponding to the maximum possible number of cars per convoy; several servers are necessary to serve a task $\tau = (v, x)$ if $\abs{x} > L$ holds.

More precisely, we define the following.
An \emph{action} for driver~$j$ is a $4$-tuple $\action = (j, v, t, x)$, where 
$j \in \{ 1, \dotsc, \ndriver \}$ specifies the driver $\driver(\action)$ performing the action, 
$v \in \locations$ specifies the station $\aloc(\action)$, 
$t \in [0, T ]$ is the time $t(a)$ when the action is performed, 
and $x \in \ZZ$ the number of cars $\acnum(\action)$ to be loaded (if $x > 0$) or unloaded (if $x < 0$). 
Hereby, the capacity of the convoy must not be exceeded, i.e., we have $\abs{x} \leq \capd$.
We say that an action is \emph{performed} (by a driver) if he loads (resp. unloads) $\abs{x}$ cars at~$v$.
An action is called \emph{pickup action} if $x > 0$, \emph{drop action} if~$x < 0$, and \emph{empty} if $x = 0$.

A \emph{move} from one station to another is
$\move = (j, v, t_v, v', t_{v'}, \loadd)$, where
$j \in \{ 1, \dotsc, \ndriver \}$ specifies the driver $\driver(\move)$ that has to move from the origin station $\origin(\move) = v \in \locations$ starting at time $\tdep(\move) = t_v$
to destination station $\dest(\move) = v' \in \locations$ arriving at time $\tarr(\move) = t_{v'}$, and a load of $\mloadd(\move) = \loadd$ cars.
Hereby, we require that 
\begin{enumerate}
 \item\label{def: enum: move: 0} $0 \leq \mloadd(\move) \leq \capd$,
 \item\label{def: enum: move: 1} $t_{v'} - t_v = d(v, v')$, and
 \item\label{def: enum: move: 2} from $\orig(\move) \neq \dest(\move)$ follows $\tarr(\move) = \tdep(\move) + \dist(\orig(\move), \dest(\move))$.
\end{enumerate}
A move $(j, v, t^v, w, t^w, \loadd)$ with $v = w$ is called \emph{waiting move}.

A \emph{tour} is an alternating sequence $\tourd = (\move^1, \action^1, \move^2, \action^2, \dotsc, \action^{\ntourd - 1}, \move^\ntourd)$ of moves and actions with
\begin{enumerate}
 \item\label{def: enum: tour: 1} $
\driver(\move^1) = \driver(\action^1) = \dotsm = \driver(\action^{\ntourd - 1}) = \driver(\move^\ntourd)$, 
 \item\label{def: enum: tour: 2} $\dest(\move^i) = \aloc(\action^i) = \orig(\move^{i+1})$,
 \item\label{def: enum: tour: 3} $\tarr(\move^i) = t(\action^i) = \tdep(\move^{i+1})$, and
 \item\label{def: enum: tour: 5} $\mloadd(\move^{i+1}) = \mloadd(\move^{i}) + \acnum(\action^i)$.
\end{enumerate}
By $\tact(\tourd)$ we denote the sequence containing the actions of the tour $\tourd$, i.e., $\tact(\tourd) = (\action^1, \dotsc, \action^{\ntourd-1})$,
and by $\tmov(\tourd)$ we denote the sequence containing the moves of the tour $\tourd$, i.e., $\tmov(\tourd) = (\move^1, \dotsc, \move^\ntourd)$.

The \emph{length of a tour} corresponds to the distance traveled by the driver.
Several tours are composed to a transportation schedule. 
A collection of tours $\{\tourd_1, \ldots, \tourd_\ndriver \}$ is a \emph{feasible transportation schedule} $\sched$ for $(M, \taskset)$ if 
\begin{enumerate}
 \item\label{def: enum: schedule: 1} every driver has exactly one tour,
 \item\label{def: enum: schedule: 2} each task $\tau \in \taskset$ is served (i.e., for every task $\tau = (v, x)$, the number of cars picked up (resp.~dropped) at station~$v$ sum up to~$x$),
 \item\label{def: enum: schedule: 3} all system states $\z^t$ are feasible during the whole time horizon $[0, T]$. 
\end{enumerate}
The \emph{total tour length} of a transportation schedule is the sum of the lengths of its tours.
Condition~\ref{def: enum: schedule: 3} requires that, besides the canonical precedences between a move $\move_i \in \tourd$ and its successor move $\move_{i+1}$ in the same tour~$\tourd$,
also dependencies between tours are respected
if \emph{preemption} is used, i.e., if a car is transported in one convoy from its origin to an intermediate station, and from there by another convoy to its destination.
This causes dependencies between tours, since some moves cannot be performed before others are done without leading to infeasible intermediate states (the reason why tours may contain waiting moves).
More precisely, an action $\action \in \tourd$ induces a precedence, avoiding a system state $z^t$ with $z^t_v < 0$ (resp.~$cap(v) < z^t_v$),
if one of the following conditions is true:
\begin{itemize}
\item $\action$ drops cars at an overfull station,
\item $\action$ picks up cars at an underfull station,
\item $\action$ drops or picks up cars at a balanced station.
\end{itemize}
Note, it is possible that the action which is dependent on $\action$ may not be uniquely determinable.
Furthermore, an action with a dependency to $\action$ does not necessarily fulfill any of the three conditions.

We call a transportation schedule \emph{non-preemptive} if there are no precedences between moves of different tours, and \emph{preemptive} otherwise.
When every driver starts and ends its tour in the same depot, we say it is a transportation schedule \emph{with backhaul}.
Otherwise, we call it a transportation schedule \emph{without backhaul}.

Our goal is to construct non-preemptive transportation schedules of minimal total tour length for the Relocation Problem in the static situation.

\begin{problem}[Static Relocation Problem $(M,\z^0,\z^T,\zd,k,L)$]
Given a (quasi) metric space $M=(V, \dist)$ induced by a (directed) weighted graph $G=(V, E, w, \capacity)$,
start state $\z^0 \in \NN^{|V|}$, destination state $\z^T \in \NN^{|V|}$ with $\abs{\z^0} = \abs{\z^T}$ and time horizon $[0,T]$, $k$ servers of capacity $L$ and $\zzd_{v_d}$ drivers start at depot $v_d \in V_D$,
find a non-preemptive transportation schedule of minimal total tour length for the metric task system $(M, \T)$ where $\T$ consists of the tasks $\tau = (v_i, z_i^0-z_i^T)$ for all~$v_i$ with~$z_i^0 \neq z_i^T$.

Hereby, we further classify the Static Relocation Problem into the following problems and their compositions:
\begin{itemize}
 \item Symmetric Static Relocation Problem: if $M$ is a metric space,
 \item Asymmetric Static Relocation Problem: if $M$ is a quasi metric space,
 \item Static Relocation Problem with backhaul: if the solution is a transportation schedule with backhaul, and
 \item Static Relocation Problem without backhaul: if the solution is a transportation schedule without backhaul.
\end{itemize}
\end{problem}

\section{Min-Cost Flows in Time-Expanded Network}
\label{sec: static: min-cost flows: wo pre w back}

In this section, we give an exact approach for the Static Relocation Problem $(M,\z^0,\z^T, \zd, k,L)$ without preemption by defining a time-expanded network with coupled flows: car and driver flows.
Hereby, we firstly describe the approach for the Static Relocation Problem with backhaul (Sections~\ref{sec: static: min-cost flows: ten: wo pre w back}--\ref{sec: static: min-cost flows: ilp: wo pre w back}).
Afterwards, in Section~\ref{sec: static: min-cost flows: without backhaul}, we show how this approach can be modified in order to solve the Static Relocation Problem without backhaul.

We consider a (quasi) metric space $M = (V, \dist)$ induced by a (directed) weighted graph $G=(V \cup \VD, E, w, \capacity)$ representing the set of stations~$V$, a set of depots $\VD$,
the road (or logical) connections $E$ between them, driving times $w \colon E \to \NN$, and the (quasi) metric $\dist$ induced by the shortest path distances in $G$. 
In addition, there are per unit costs $\costc$ and $\costd$ for moving cars and drivers within $G$. 
The task set ${\T}$ consists of the tasks $\tau = (v, z_v^0-z_v^T)$ for all $v$ with $z_v^0 \neq z_v^T$. 
The output is a preemptive transportation schedule for the metric task system $(M, {\T})$, the objective is to minimize its total tour length.

For that, we build a directed graph~$G_T = (V_T, A_T)$, with $A_T = A_H \cup A_L$, as a time-expanded version of the original network~$G$. 
For each $1 \leq j \leq k$, the drivers and their convoys will form flows~$\fd_j$ and $\fc_j$ through~$G_T$ which are coupled in the sense that on those arcs $a \in A_L$ used for moves of convoys,
we have the condition $\fc_j(a)\leq \capd \cdot \fd_j(a)$ reflecting the dependencies between the two flows.

\subsection{Time-expanded network \texorpdfstring{$G_T$}{GT}}
\label{sec: static: min-cost flows: ten: wo pre w back}

We build a time-expanded version~$G_T = (V_T, A_T)$ of the original network~$G$.

The node set $V_T$ is constructed as follows: 
for each station and each depot~$v \in V$ and each time point~$t$ in the given time horizon $[0, T]$, there is a node $(v,t) \in V_T$ which represents station/depot~$v$ at time~$t$.

The arc set $A_T = A_H \cup A_L$ of~$G_T$ is composed of two subsets:
\begin{itemize}
\item $A_H$ contains, for each station~$v \in V$ of the original network and each $t \in \{ 0, 1, \dotsc, T-1 \} $, the \emph{holdover arc} connecting~$(v,t)$ to~$(v,t+1)$. 
\item $A_L$ contains, for each edge $(v,v')$ of~$G$ and each point in time~$t \in \{ 0, \dotsc, T \}$ such that $t + \dist(v,v') \leq T$, the \emph{relocation arc} from $(v,t)$ to $(v',t + \dist(v,v'))$.
\end{itemize}
Note, due to the identity of indiscernibles (i.e., $d(v,v') = 0$ iff $v = v'$) it follows that the time-expanded network $G_T$ is acyclic by construction.
Furthermore, $G_T$ is constructed the same way, regardless whether the original network is directed or undirected.

\subsection{Flows in \texorpdfstring{$G_T$}{GT}}
\label{sec: static: min-cost flows: flows: wo pre w back}

On the relocation arcs of~$G_T$, we define for each driver~$i$ a driver flow $\fd_i$ as well as a car flow $\fc_i$ which represents the convoy of this driver.
On the holdover arcs of~$G_T$, we define a driver flow $\fd_i$ for each driver but only one car flow $\fc$.
We specify the capacities as well as the costs for each arc with respect to the different flows.

A flow on a relocation arc corresponds to a (sub)move in a tour, i.e., some cars are moved by drivers in a convoy from station~$v$ to another station~$v'$.
Hereby, the stations can be used to pick up or to drop cars, or simply to transit a node (when a driver/convoy passes the station(s) on its way to another station).
A relocation arc from $(v,t)$ to $(v', t + \dist(v,v'))$ has capacity~$1$ for each of the driver flow $\fd_i$, $1 \leq i \leq k$.
In order to ensure that cars are moved only by drivers and only in convoys of capacity~$\capd$, we require that 
\begin{align} \label{eq: static: min-cost flows: flows: wo pre w back: 1}
  \fc_i(a) \leq \capd \cdot \fd_i(a) \text{ for all $a \in A_L$.}
\end{align}
Thus, the capacities for the car flows $\fc_i$ on the relocation arcs are not given by constants but by a function.
Note that due to these flow coupling constraints, the constraint matrix 
of the network is not totally unimodular (as in the case of uncoupled flows) and therefore solving such problems becomes hard.
From Equation~\eqref{eq: static: min-cost flows: flows: wo pre w back: 1} it directly follows that $\fc_i(a) \leq \capd$ holds for all relocation arcs $a \in A_L$.

Since we consider transportation schedules without preemption, we must ensure that cars are not exchanged between convoys.
Furthermore, we have to ensure that there is no ``tour internal preemption'', i.e., a tour drops a car at one station and later picks it up again.
Inner preemption and preemption between tours can be avoided by ensuring that cars are only picked up at overfull stations and dropped at underfull stations.
This means that there are more cars in a convoy leaving an overfull station than entering the station.
Analogously, there are more cars in a convoy entering an underfull station than leaving the station, and in balanced stations the number of cars entering the station is equal to the number of cars leaving the station.
Thus, one has to ensure that the following constraints hold for every $1 \leq i \leq k$
\begin{align}
  & \sum_{a \in \delta^+_L(v_o, t)} \fc_i(a) \leq \sum_{a \in \delta^-_L(v_o, t)} \fc_i(a), && 
      \begin{aligned} 
          & \text{ for all overfull stations } v_o \in V_O \\
          & \text{ and all $0 < t < T$},
      \end{aligned} \label{eq: static: min-cost flows: flows: wo pre wo back: over under balanced: 1} \\
  & \sum_{a \in \delta^+_L(v_u, t)} \fc_i(a) \geq \sum_{a \in \delta^-_L(v_u, t)} \fc_i(a), &&
      \begin{aligned}
          & \text{ for all underfull stations } v_u \in V_U \\
          & \text{ and all $0 < t < T$},
      \end{aligned} \label{eq: static: min-cost flows: flows: wo pre wo back: over under balanced: 2} \\
  & \sum_{a \in \delta^+_L(v_b, t)} \fc_i(a) = \sum_{a \in \delta^-_L(v_b, t)} \fc_i(a), &&
      \begin{aligned}
          & \text{ for all balanced stations } v_b \in V_B \\
          & \text{ and all $0 < t < T$},
      \end{aligned} \label{eq: static: min-cost flows: flows: wo pre wo back: over under balanced: 3} 
\end{align}
where $\delta^+_L(v,t)$ denotes the set of incoming relocation arcs of $(v,t)$, and $\delta^-_L(v,t)$ denotes the set of outgoing relocation arcs of $(v,t)$.
Note, that the sums in these equations are of a technical nature.
In fact, in each of these equations there is at most one summand positive while all other are zero.
However, since we do not know in advance which one is positive, we have to sum over all incoming resp.~outgoing relocation arcs of a node.

The costs for the driver flows on a relocation arc~$a$ correspond to the distance traveled, i.e., if $a$ corresponds to the edge~$(v,v') \in E$ the costs correspond to $w(a) := w(v, v')$.

A flow on a holdover arc corresponds to cars/drivers remaining at the station in the time interval~$[t,t+1]$.
Since we do not allow precedences between tours, a car is dropped only at underfull stations and only picked up from overfull stations;
at balanced stations, cars are neither picked up nor dropped (see constraints~\eqref{eq: static: min-cost flows: flows: wo pre wo back: over under balanced: 1}--\eqref{eq: static: min-cost flows: flows: wo pre wo back: over under balanced: 3}).
Therefore, we define only one car flow~$\fc$ on the holdover arcs.
Furthermore, it follows from~\eqref{eq: static: min-cost flows: flows: wo pre wo back: over under balanced: 1}--\eqref{eq: static: min-cost flows: flows: wo pre wo back: over under balanced: 3}
that at an overfull station the number of cars is non-increasing over time, at an underfull station the number of cars is non-decreasing over time, and at balanced stations the number of cars remains equal all the time.
Thus, it holds for every reachable system state~$\z$
\begin{align*}
  & z^0_v \geq z_v \geq z^T_v, & \text{ if $v$ is an overfull station,} \\
  & z^0_v \leq z_v \leq z^T_v, & \text{ if $v$ is an underfull station,} \\
  & z^0_v = z_v = z^T_v, & \text{ if $v$ is a balanced station.}
\end{align*}
Since $\z^0$ and $\z^T$ are feasible system states by definition, this implies that every reachable system state~$\z$ is feasible as well.
Thus, capacities are not needed for holdover arcs with respect to the car flow~$\fc$.
For each driver flow~$\fd_i$, $1 \leq i \leq k$, we set a capacity of~1 on the holdover arcs.
Moreover, the cost for all flows on such arcs are zero.


With the help of~$\zd$, we can assign a depot to each driver.
For that, we set $\bd_{v_D}^i = 1$ if driver~$i$ starts (and ends) its tour in the depot~$v_D \in \VD$, and $\bd_{v_D}^i = 0$ otherwise.

To correctly initialize the system, we use the nodes $(v, 0) \in V_T$ as sources for both flows and set their 
balances accordingly to the initial numbers of cars at station~$v$ and time~$0$ and locate the drivers at their depot $v_D \in \VD$, i.e.,
the sum of the car flow values on all outgoing arcs of $(v, 0)$ is set to $z_v^0$, and for each driver~$i$, $1 \leq i \leq k$, the sum of the driver flow values on all outgoing arcs of $(v_D, 0)$ is set to~$\bd_{v_D}^i$.

For all internal nodes $(v, t) \in V_T$ with $0 < t < T$, we use flow conservation constraints for the car flows~$\fc$ and $\fc_i$, i.e.,
\[
    \begin{aligned}
    \fc((v, t-1), (v, t)) + \sum_{a\in \delta^+_L(v,t)} \sum_{i=1}^k \fc_i(a) 
        = \fc((v, t), (v, t+1)) + \sum_{a \in \delta^-_L(v,t)} \sum_{i=1}^k \fc_i(a),
    \end{aligned}
\]
and the standard flow conservation constraints for each driver flows~$\fd_i$, i.e.,
\[
  \sum_{a\in \delta^-(v,t)} \fd_i(a) = \sum_{a\in \delta^+(v,t)} \fd_i(a),
\]
for all~$1 \leq i \leq k$.

To ensure that the destination state is reached and each driver returns to its depot, we use as sinks the nodes $(v, T)$, $v \in V$, for the car flow 
and the nodes $(v_D, T)$, $v_D \in \VD$, for the driver flow, and set their balances accordingly to $\z^T$ resp.~to~$\bd_{v_D}^i$ for each driver flow~$\fd_i$.
Since there are $k$~drivers, the sum of all outgoing (resp.~incoming) driver flows of all depots $v_D \in \VD$ sums up to~$k$.

Figure \ref{fig: flownetwork} illustrates a time-expanded network with capacities on the arcs as well as the balances for the nodes~$(v, 0)$ and $(v, T)$.
Furthermore, optimal flows $\fc, \fc_i$ and $\fd_i$ based on the network from Figure \ref{fig: reopt} are shown in this figure.

\begin{figure}
  \centering
  \includegraphics[width=0.962\textwidth]{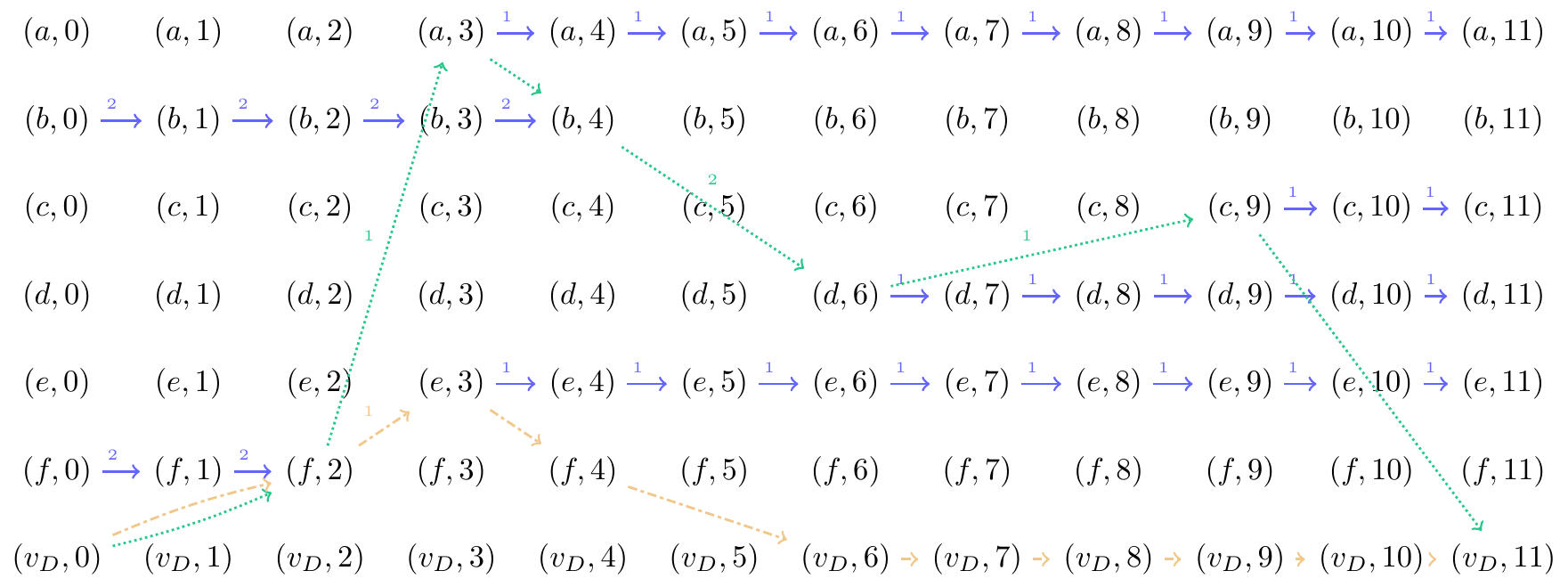}
  \caption{This figure shows the time-expanded network $G_T$ for the graph $G$ from Figure~\ref{fig: reopt}.
    The time horizon is set to $T = 11$.
    Every node of the form~$(v,t)$ represents a station~$v$ (or the depot) at time~$t$.
    In this car sharing system, there are two drivers.
    The driver flow $\fd_1$ is indicated by dash dotted arcs, the driver flow~$\fd_2$ by dotted arcs, and the car flow $\fc$ by solid arcs.
    The car flows~$\fc_1$ and~$\fc_2$ are not shown in the image, however, their flow values are at the corresponding driver flows (if the value is positive).
    The flow values of~$\fc$ on the holdover arcs are the numbers at the holdover arcs.
    The total tour length is 17.}
  \label{fig: flownetwork}
\end{figure}

\subsection{Integer linear program for Min-Cost-Flow in $G_T$}
\label{sec: static: min-cost flows: ilp: wo pre w back}

To solve the Static Relocation Problem exactly, we aim at determining convoy tours with a minimal total tour length.
For that, we summarize the previous sections by presenting an integer linear programming formulation for a min-cost flow problem in the time-expanded network $G_T = (V_T, A_T)$:
\begin{subequations}\label{eq: static: min-cost flows: ilp: wo pre w back}
  \begin{flalign}
    \min& \,  \sum_{a\in A_L} w(a) \sum_{i = 1}^k \fd_i(a)                                                                                                      \label{eq: static: min-cost flows: ilp: wo pre w back: 1}\\
    & \fc((v, 0), (v, 1)) + \sum_{a\in \delta^-_L(v,0)} \sum_{i=1}^k \fc_i(a) = z_v^0,                           && \forall (v,0) \in V_T                       \label{eq: static: min-cost flows: ilp: wo pre w back: 10}\\
    & \fc((v,T-1),(v,T)) + \sum_{a\in \delta^+(v,T)} \sum_{i=1}^k \fc_i(a) = z_v^T,                              && \forall (v,T) \in V_T                       \label{eq: static: min-cost flows: ilp: wo pre w back: 11}\\
    & \sum_{a\in \delta^-(v, 0)} \fd_i(a) = \bd_v^i,                                                             && \forall 1 \leq i \leq k                     \label{eq: static: min-cost flows: ilp: wo pre w back: 12}\\
    & \sum_{a\in \delta^+(v_D,T)} \fd_i(a) = \bd_v^i,                                                            && \forall 1 \leq i \leq k                     \label{eq: static: min-cost flows: ilp: wo pre w back: 13}\\
    & 
    \begin{aligned}
    \fc((v&, t-1), (v, t)) + \sum_{a\in \delta^+_L(v,t)} \sum_{i=1}^k \fc_i(a) \\
        &= \fc((v, t), (v, t+1)) + \sum_{a \in \delta^-_L(v,t)} \sum_{i=1}^k \fc_i(a),
    \end{aligned}
                                                                                                                &&  \forall v \in V, 0 < t < T                  \label{eq: static: min-cost flows: ilp: wo pre w back: 14}\\
    & \sum_{a\in \delta^+(v,t)} \fd_i(a) = \sum_{a\in \delta^-(v,t)} \fd_i(a),                && \hspace{-1.2cm} \forall v \in V, 0 < t < T, \forall 1 \leq i \leq k    \label{eq: static: min-cost flows: ilp: wo pre w back: 15}\\
    & \sum_{a \in \delta^+_L(v_o, t)} \fc_i(a) \leq \sum_{a \in \delta^-_L(v_o, t)} \fc_i(a), && \hspace{-1.7cm}  \forall v_o \in V_O, \forall 0 < t < T, \forall 1 \leq i \leq k \label{eq: static: min-cost flows: ilp: wo pre w back: 16} \\
    & \sum_{a \in \delta^+_L(v_u, t)} \fc_i(a) \geq \sum_{a \in \delta^-_L(v_u, t)} \fc_i(a), && \hspace{-1.7cm}  \forall v_u \in V_U, \forall 0 < t < T, \forall 1 \leq i \leq k \label{eq: static: min-cost flows: ilp: wo pre w back: 17} \\
    & \sum_{a \in \delta^+_L(v_b, t)} \fc_i(a) = \sum_{a \in \delta^-_L(v_b, t)} \fc_i(a),    && \hspace{-1.7cm}  \forall v_b \in V_B, \forall 0 < t < T, \forall 1 \leq i \leq k \label{eq: static: min-cost flows: ilp: wo pre w back: 18} \\
    & \fc_i(a) \leq \capd \cdot \fd_i(a),                                                        && \forall a \in A_L, \forall 1 \leq i \leq k                  \label{eq: static: min-cost flows: ilp: wo pre w back: 20}\\
    & \fc, \fc_i \text{ integer}, \fd_i \text{ binary},          \label{eq: static: min-cost flows: wo pre w back: ilp: 21}
  \end{flalign}
\end{subequations}
where $\delta^-(v,t)$ denotes the set of outgoing arcs of $(v,t)$ (and $\delta^-_L(v,t)$ denotes the set of outgoing relocation arcs of $(v,t)$),
and $\delta^+(v,t)$ denotes the set of incoming arcs of $(v,t)$ (and $\delta^+_L(v,t)$ denotes the set of incoming relocation arcs of $(v,t)$). 
The objective function~\eqref{eq: static: min-cost flows: ilp: wo pre w back: 1} measures and minimizes the total tour length of the transportation schedule.
The equalities~\eqref{eq: static: min-cost flows: ilp: wo pre w back: 10} and~\eqref{eq: static: min-cost flows: ilp: wo pre w back: 12} give the initial number of cars (resp.~drivers) for each station (resp.~for the depots).
Conditions~\eqref{eq: static: min-cost flows: ilp: wo pre w back: 14} and~\eqref{eq: static: min-cost flows: ilp: wo pre w back: 15} are the flow conservation constraints for the flows~$\fd_i$ and $\fc, \fc_i$, respectively.
The conditions~\eqref{eq: static: min-cost flows: ilp: wo pre w back: 11} and~\eqref{eq: static: min-cost flows: ilp: wo pre w back: 13} ensure to reach the destination state
and that every driver returns to its depot at the end of the time horizon.
The conditions~\eqref{eq: static: min-cost flows: ilp: wo pre w back: 16}, \eqref{eq: static: min-cost flows: ilp: wo pre w back: 17} and~\eqref{eq: static: min-cost flows: ilp: wo pre w back: 18}
ensure that cars are only picked up at overfull stations, and dropped at underfull stations.
Finally, the constraints~\eqref{eq: static: min-cost flows: ilp: wo pre w back: 20} couple the flows~$\fc_i$ and~$\fd_i$ so that cars on relocation arcs cannot move without their driver.
Furthermore, these constraints give the capacities for $\fc_i$ on relocation arcs.

Finally, the flows in the time-expanded network have to be interpreted as a transportation schedule.
Hereby, car and driver flows on relocation arcs correspond to a move, differences in the car flow to actions.
Note, one can easily compute a transportation schedule from the resulting flows.
For that, positive flow $\fd_j$ on a relocation arc $a = ((v, t_v),(v', t_{v'}))$ corresponds to a move $\move = (j, v, t_v, v', t_{v'}, \fc_j(a))$.
Actions are implied by differences of flow values between incoming and outgoing relocation arcs, i.e., if we have~$\fd_j(a) = \fd_j(a') = 1$ with $a' = ((v',t_{v'}), (v'', t_{v''}))$,
and $\fc_j(a) \neq \fc_j(a')$ then the action~$\action = (j, v', t_{v'}, \fc_j(a) - \fc_j(a'))$ is implied.
This implies:

\begin{theorem}
The optimal solution of system \eqref{eq: static: min-cost flows: ilp: wo pre w back} corresponds to a non-preemptive transportation schedule with minimal total tour length
for the Static Relocation Problem with backhaul $(G,\z^0,\z^T, \zd, k,L)$ (Symmetric and Asymmetric).
\end{theorem}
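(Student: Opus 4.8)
The plan is to establish a value-preserving correspondence between the feasible integer solutions of \eqref{eq: static: min-cost flows: ilp: wo pre w back} and the feasible non-preemptive transportation schedules with backhaul for $(G,\z^0,\z^T,\zd,k,L)$, in such a way that the objective \eqref{eq: static: min-cost flows: ilp: wo pre w back: 1} of a solution equals the total tour length of the associated schedule. Once such a correspondence is exhibited in both directions, minimality transfers automatically: an optimal ILP solution must map to a schedule of minimal total tour length, and conversely, so the claim follows. I would therefore split the argument into two directions and then combine them.

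First I would treat the direction from flows to schedules, refining the construction already sketched before the theorem. Fix a feasible integer solution. Since each $\fd_i$ is binary, satisfies flow conservation \eqref{eq: static: min-cost flows: ilp: wo pre w back: 15} at every internal node, and has unit balance \eqref{eq: static: min-cost flows: ilp: wo pre w back: 12}, \eqref{eq: static: min-cost flows: ilp: wo pre w back: 13} at a single depot source $(v_D,0)$ and the matching depot sink $(v_D,T)$, and since $G_T$ is acyclic by construction, standard flow decomposition shows that $\fd_i$ is the indicator of a single directed path from $(v_D,0)$ to $(v_D,T)$. Reading the relocation arcs of this path as moves (with load $\fc_i$), the holdover arcs as waiting moves, and the differences of $\fc_i$ at internal path nodes as actions yields an alternating sequence; I would verify the tour conditions \ref{def: enum: tour: 1}--\ref{def: enum: tour: 5}, in particular that the load-balance condition \ref{def: enum: tour: 5} is precisely the $\fc_i$-difference defining the actions and that $0\le \fc_i\le \capd$ on every arc by \eqref{eq: static: min-cost flows: ilp: wo pre w back: 20}. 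The common source/sink depot gives backhaul; the balances $\z^0,\z^T$ together with conservation give that every task is served; and the monotonicity argument already established from \eqref{eq: static: min-cost flows: ilp: wo pre w back: 16}--\eqref{eq: static: min-cost flows: ilp: wo pre w back: 18} shows that every reachable state is feasible. The total tour length of this schedule equals $\sum_i\sum_{a\in A_L} w(a)\fd_i(a)$, which is exactly the objective value.

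For the converse direction I would encode a given feasible non-preemptive schedule $\{\tourd_1,\dots,\tourd_k\}$ as flows: set $\fd_i=1$ on the relocation and holdover arcs traced by $\tourd_i$ through $G_T$, let $\fc_i$ carry the corresponding convoy load on relocation arcs, and let the single holdover flow $\fc$ record the number of cars present at each station over each time step. Checking \eqref{eq: static: min-cost flows: ilp: wo pre w back: 10}--\eqref{eq: static: min-cost flows: ilp: wo pre w back: 15} and the coupling \eqref{eq: static: min-cost flows: ilp: wo pre w back: 20} is routine bookkeeping from the tour and schedule definitions. The delicate constraints are \eqref{eq: static: min-cost flows: ilp: wo pre w back: 16}--\eqref{eq: static: min-cost flows: ilp: wo pre w back: 18}, whose validity rests on the characterization, argued informally before the theorem, that a schedule is non-preemptive precisely when every convoy only picks up cars at overfull stations and only drops them at underfull stations, with no net change at balanced stations. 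I would make this equivalence explicit: any drop at an overfull, pickup at an underfull, or action at a balanced station induces a precedence that is either inter-tour (forbidden by non-preemption) or a tour-internal drop-then-repickup (``inner preemption''), and conversely a schedule obeying the pickup/drop restriction admits no such precedence.

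The main obstacle is precisely this last equivalence between the combinatorial notion of non-preemption and the arithmetic monotonicity constraints \eqref{eq: static: min-cost flows: ilp: wo pre w back: 16}--\eqref{eq: static: min-cost flows: ilp: wo pre w back: 18}, which must be nailed down in both directions and at the per-driver level, i.e. for each $\fc_i$ separately and not merely for the aggregate car flow. A secondary point requiring care is that the flow-to-schedule and schedule-to-flow maps need not be mutually inverse as literal maps: different realizations of waiting moves, or different orders of picking up independent loads, can produce the same flow. Since, however, both maps preserve feasibility and total tour length, this is enough to match the optimal values on the two sides and hence to prove the theorem.
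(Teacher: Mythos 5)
Your overall strategy---a value-preserving correspondence in both directions, with flow decomposition of each binary $\fd_i$ in the acyclic network $G_T$ giving the flow-to-schedule map---is the natural rigorous completion of what the paper does; the paper itself only describes the flow-to-schedule interpretation and then asserts the theorem, so your first direction matches its (implicit) proof and is fine. The problem lies in the converse direction, at exactly the step you single out as the main obstacle, and the resolution you propose would fail. The equivalence you claim---``a schedule is non-preemptive precisely when every convoy only picks up cars at overfull stations and only drops them at underfull stations, with no net change at balanced stations''---is false under the paper's definition. Non-preemption is defined purely as the absence of precedences \emph{between moves of different tours}. Two kinds of feasible non-preemptive schedules violate \eqref{eq: static: min-cost flows: ilp: wo pre w back: 16}--\eqref{eq: static: min-cost flows: ilp: wo pre w back: 18}: (a) a single tour may park cars at an intermediate station and re-collect them later (``inner preemption''); this creates no inter-tour precedence at all, so the schedule is non-preemptive, yet it performs actions at a balanced station or drops at an overfull one; (b) even across tours, occupancy slack can make the forbidden actions precedence-free: if $v$ is overfull with $z^0_v=10$, $z^T_v=8$ and $\capacity(v)\geq 11$, a schedule in which tour $1$ drops one car at $v$ and tour $2$ picks up three is feasible in either temporal order, hence induces no precedence and is non-preemptive, but it violates \eqref{eq: static: min-cost flows: ilp: wo pre w back: 16}.

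Consequently your schedule-to-flow map is undefined on part of the feasible set, and your two maps only establish that the ILP optimum equals the minimum over the \emph{restricted} class of schedules obeying the monotone pickup/drop pattern, which a priori is only an upper bound on the minimum over all non-preemptive schedules. To close this gap one needs an additional exchange or rerouting lemma: every non-preemptive transportation schedule can be transformed, without increasing total tour length, into one satisfying \eqref{eq: static: min-cost flows: ilp: wo pre w back: 16}--\eqref{eq: static: min-cost flows: ilp: wo pre w back: 18}. This is not mere bookkeeping---because cars are interchangeable one can relabel which car goes where, but eliminating parking moves in general requires rerouting drivers and invoking the triangle inequality, in the spirit of the constructions of Section~\ref{sec: static: reopt: approximation factor}---and it is precisely the point the paper also leaves unaddressed by stating the theorem without proof. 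So your attempt correctly isolates the crux, but the equivalence it rests on does not hold as stated, and either the missing lemma must be proved or the theorem must be read as optimality within the monotone class.
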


\subsection{Static Relocation Problem without backhaul}
\label{sec: static: min-cost flows: without backhaul}

In this section, we consider the Static Relocation Problem without backhaul, and
highlight the differences between the exact approach for solving the Static Relocation Problem with and without backhaul.

In the integer linear program stated in the previous section, the equations~\eqref{eq: static: min-cost flows: ilp: wo pre w back: 13} ensure that every driver returns to its depot.
Thus, it is sufficient to conveniently modify these equations.

For that, we replace the equations~\eqref{eq: static: min-cost flows: ilp: wo pre w back: 13} by the following two constraints
\begin{align}
  & \sum_{a\in \delta^+(v_D,T)} \fd_i(a) \leq 1,                                        && \forall 1 \leq i \leq k, \forall v_D \in \VD,                  \label{eq: static: min-cost flows: ilp: wo pre wo back: 1} \\
  & \sum_{a\in \delta^+(v,T)} \fd_i(a) = 0,                                             && \forall 1 \leq i \leq k, \forall v \in V \setminus \VD.        \label{eq: static: min-cost flows: ilp: wo pre wo back: 2}
\end{align}
Hereby, \eqref{eq: static: min-cost flows: ilp: wo pre wo back: 1} ensures that every driver can return to any depot
and \eqref{eq: static: min-cost flows: ilp: wo pre wo back: 2} ensures that the drivers do not end their tours in a station but only in a depot.
Thus, we can compute non-preemptive transportation schedules with a minimal total tour length for the Symmetric (resp.~Asymmetric) Static Relocation Problem without backhaul.

\section{The combinatorial algorithm \REOPT}
\label{sec: static: reopt}

The computation times for computing an exact solution by solving the integer linear program~\eqref{eq: static: min-cost flows: ilp: wo pre w back: 10}--\eqref{eq: static: min-cost flows: wo pre w back: ilp: 21}
are extremely high even for small instances (see Table~\ref{tab: computational results}).
This motivates the study of a combinatorial algorithm to solve the problem in reasonably short time.
Therefore, we describe in detail the strategy \REOPT\ proposed in \cite{LAGOS2013} to solve a Static Relocation Problem.
The input of \REOPT\ is the Static Relocation Problem $(M, \z^0, \z^T, \zd, k, \capd)$.
Hereby, we consider a complete weighted graph $G = (\VO \cup \VU \cup \VD, E, d)$ induced by a (quasi) metric space $M = (V, d)$, containing only the overfull stations $\VO \subseteq V$ (with $z_i^0 > z_i^T$),
the underfull stations $\VU \subseteq V$ (with $z_i^0 < z_i^T$), and a set of depots $\VD$, as well as all connections $E$ between them and driving times $d \colon E \to \NN$.
The task set ${\cal T}$ again consists of the tasks $\tau = (v, z_v^0 - z_v^T)$ for all $v \in \VO \cup \VU$.
The output of \REOPT\ is a non-preemptive transportation schedule for the (quasi) metric task system $(M, {\cal T})$, the objective is to minimize its total tour length.

The \REOPT\ approach performs three steps.
Firstly, we construct a weighted complete bipartite graph and find a matching between overfull and underfull stations with minimal edge weight.
Each edge in this matching corresponds to a transport request between two stations.
Secondly, tours for all convoys are constructed (using a heuristic insertion technique) serving each transport request.
Since the transport requests stemming from the minimum matching do not necessarily lead to optimal convoy tours, 
the final step is to iteratively augment the tours by ``rematching'' certain origin/destination pairs, i.e., to reinsert accordingly adapted moves in such a way that the total tour length decreases.

\subsection{First step: Compute transport requests}
\label{sec: static: reopt: first step}

In the first step, we compute transport requests of the form $(v_o, v_u, x)$, where $v_o$ is an overfull station, $v_u$ an underfull station and $x$ is the number of cars to be transported from $v_o$ to $v_u$. 
For that, we construct a weighted complete bipartite graph $B = (\VO \cup \VU, A, \boldsymbol{w}, \boldsymbol{p})$ and
consider a restricted vector $\boldsymbol{w} \in \RR^{|A|}$ of edge weights (reflecting the distance between the two adjacent stations)
and a vector $\boldsymbol{p} \in \NN^{|\VO \cup \VU|}$ of node weights reflecting the number $p_v = |z_v^0-z_v^T|$ of cars which have to be moved in or out the corresponding station $v$.

Define a \emph{perfect $p$-matching} in~$B$ to be a multiset $x \colon A \rightarrow \NN$ of the edges such that for each node $v \in \VO \cup \VU$,
exactly $p_v$ incident edges are selected, counted with multiplicities $x_a$. 
Note that by construction of $\boldsymbol{p} \in \NN^{|\VO \cup \VU|}$, the existence of such a perfect $p$-matching
is ensured by $\sum_{v_o \in \VO} p_{o} = \sum_{v_u \in \VU} p_{u}$ since $\sum_{v \in V} z_v^0 = \sum_{v \in V} z_v^T$. 
The goal is to find a perfect $p$-matching~$x$ with minimal edge weight $\sum_{a \in A} w_a x_a$, including multiplicities.
The problem can be formulated by the following integer linear program
\begin{subequations}
\label{eq: static: reopt: matching: ilp}
  \begin{flalign}
    \min& \,  \sum_{a\in A} w_a x_a,                                                                                                                            \label{eq: static: reopt: matching: ilp: 1}\\
    & \sum_{a \in \delta^+(v_o)} x_a = p_{v_o},                                                                 && \forall v_o \in V_O                          \label{eq: static: reopt: matching: ilp: 10}\\
    & \sum_{a \in \delta^+(v_u)} x_a = p_{v_u},                                                                 && \forall v_u \in V_U                          \label{eq: static: reopt: matching: ilp: 11}\\
    & x_a \text{ integer}.                                                                                                                                      \label{eq: static: reopt: matching: ilp: 21}
  \end{flalign}
\end{subequations}
Note that the constraint matrix is totally unimodular, and thus, problem can be solved efficiently~\cite{}.

Each selected matching edge $a = v_ov_u$, with $v_o \in \VO$ and $v_u \in \VU$, corresponds to a transport request for $x_a$ cars from station $v_o$ to station $v_u$.
The set $\TR$ of all such transport requests provides the input for a Pickup and Delivery Problem (PDP) which has to be solved subsequently in order to construct tours for all convoys serving each transport request.

\subsection{Second step: Serving the transport requests}
\label{sec: static: reopt: second step}

In this section, we give an algorithm~\PDPINSERT\ which solves the PDP using heuristic insertion techniques.
The input for \PDPINSERT\ is the complete weighted graph $G=(\VO \cup \VU \cup \VD, E, w)$, the total number $k$ of drivers, the convoy capacity $\capd$, and the set of transport requests $\TR$.
The output of \PDPINSERT\ is a non-preemptive transportation schedule for the drivers, which serves all transport requests in~$\TR$ within the time horizon~$[0,T]$.

For that, we define some further notions.
Let $\tourd_j$ be a tour for a driver~$j$ and let $tr = (v_o, v_u, x)$ be a transport request.
We say that $tr$ is \emph{served} by $\tourd_j$ if there exists a pickup action $\action^o = (\cdot, v_o, y)$ and a drop action $\action^u = (\cdot, v_u, -y)$ in $\tourd_j$,
so that $\aexetime(\action^o) < \aexetime(\action^u)$.
Hereby, $tr$ is \emph{fully served} if $y = x$ and \emph{partially served} if $y < x$.
By $y^{tr}(\tourd_j) := y$ we denote the number of cars served from $tr$ by $\tourd_j$. 
A transport request $tr = (v_o, v_u, x)$ can be \emph{(partially) inserted} into a tour~$\tourd_j$, serving~$y$ cars, as follows:
\begin{itemize}
 \item select a move $\move = (j, v, t^v, w, t^w, x_m)$ where $v_o$ shall be inserted,
 \item remove $\move$ from $\tourd_j$,
 \item add move $(j, v, t^v, v_o, t^v + \dist(v, v_o), x_m)$, action $(j, v_o, y)$, move $(j, v_o, t^v + \dist(v, v_o), w, t^v + \dist(v, v_o) + \dist(v_o, w), x_m + y)$ to the tour,
 \item update departure and arrival times of all successive moves and actions,
 \item do analogous steps for $u$.
\end{itemize}
This yields a new tour $\tourd'_j$ which (partially) serves $tr$.
By applying the opposite steps, $tr$ can be \emph{removed} from $\tourd'_j$, which yields $\tourd_j$.

Now let $tr$ be (fully or partially) served by $\tourd$, and $\tourd^w_j$ be the tour derived from $\tourd_j$ without serving $tr$.
We denote the \emph{marginal costs per load unit} $CM(tr, \tourd_j)$ by
\[
 CM(tr, \tourd_j) = \frac{len(\tourd_j) - len(\tourd^w_j)}{y^{tr}(\tourd_j)}.
\]
Now the algorithm \PDPINSERT\ can be described as follows:
\begin{enumerate}
\item For each driver~$j$ initialize the tour so that it starts and ends in the drivers depot~$v_D \in \VD$, i.e., initialize the tour with the move $(j, v_D, 0, v_D, 0, 0)$.
\item Choose a transport request $tr$ that has an origin or a destination already in a tour, else randomly select one.
\item Calculate the marginal cost per load unit of inserting this transport request to each possible tour.
      Hereby, take the number of cars to be transported into account (i.e., respect the convoy capacity) as well as the time.
      Select the tour with the minimum marginal cost per load unit and insert the transport request into this tour.
\item If the transport request $tr$ is fully served, remove it from $\TR$.
      Otherwise, it is partially served by a tour $\tourd$.
      Then subtract the number of cars inserted from the load of the transport request, i.e., remove $tr$ from $\TR$ and add a new transport request $tr' = (v_o, v_u, x - y^{tr}(\tourd))$ to $\TR$.
\item Repeat these steps until all the transport requests are fully served.
\end{enumerate}

\begin{figure}[ht]
    \centering
\includegraphics[width=0.4\textwidth]{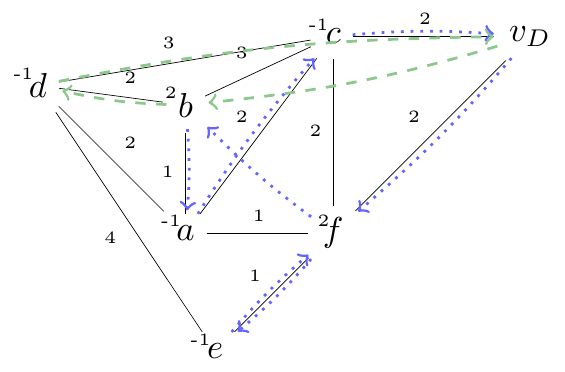} $\, \, \, \, \,$ \includegraphics[width=0.4\textwidth]{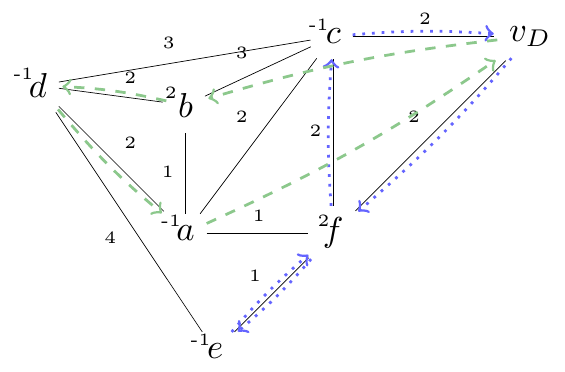}
 \caption{
          In each figure (left and right side), a carsharing system with 6 stations ($a$ to $f$) and one depot ($v_D$) is illustrated as a graph $G$.
          There are two tours (dashed and dotted), one for each of the two drivers (server capacity $\capd = 2$), giving a non-preemptive transportation schedule.
          The numbers at the stations show the amount of cars to be moved from ($> 0$) or to the station ($< 0$).
          On the left side, the transportation schedule before the reoptimization step is shown (total tour length = 22), on the right side, the transportation schedule after this step (total tour length = 19).
 }
 \label{fig: reopt}
\end{figure}

The algorithm \PDPINSERT\ computes a non-preemptive transportation schedule. 
However, the transportation schedule created from the transport requests stemming from the minimal perfect $p$-matching does not necessarily lead to optimal tours.
Therefore, the final step is to iteratively augment the tours by ``rematching'' certain origin/destination pairs,
i.e., to reinsert accordingly adapted moves in such a way that the total tour length decreases.

\subsection{Third step: Reoptimization}

The algorithm defined here involves the two previous steps: computation of transport requests and the algorithm \PDPINSERT.
The input for the reoptimization step is a transportation schedule $\sched$ serving all transport requests in $\TR$ and two natural numbers $\Delta, N \in \NN$.
The output is a transportation schedule having a total tour length less or equal to the total tour length of $\sched$.
The algorithmic scheme of the reoptimization step is as follows:
\begin{enumerate}
 \item From $\sched$ we withdraw the $N$ transport requests with highest marginal cost per load: $\TR' = \{ (v_o^1, v_u^1, x^1), \dotsc, (v_o^N, v_u^N, x^N) \}$.
 \item From the withdrawn transport requests we compute sets of over- and underfull stations, i.e.,
       the set $V_{O^*} = \{ v_o \in \VO \mid (v_o, \cdot, \cdot) \in \TR' \}$, the set $V_{U^*} = \{ v_u \in \VU \mid (\cdot, v_u, \cdot) \in \TR' \}$ and the vector $\boldsymbol{x} \in \NN^{\abs{V_{O^*}} \cdot \abs{V_{U^*}}}$
       by $x_{ou} = \min\{ p_o, p_u \}$, where $p_o = \sum_{(v_o, \cdot, x) \in \TR'} x$ and $p_u = \sum_{(\cdot, v_u, x) \in \TR'} x$.
 \item For every pair $(v_o, v_u) \in V_{O^*} \times V_{U^*}$ and every tour $\tourd$ we compute the additional marginal cost $CM^+(\tourd, (v_o,v_u))$, where $CM^+(\tourd, (v_o,v_u)) = \tourd^{(v_o,v_u)} - \tourd$
       and $\tourd^{(v_o,v_u)}$ is the tour after inserting a transport request $(v_o, v_u, 1)$.
       Let $w_{ou} = \min_{\tourd \in \sched} CM^+(\tourd, (v_o,v_u))$ be the minimal additional marginal cost.
 \item Next we generate a weighted complete bipartite graph $B = (V_{O^*} \cup V_{U^*}, A, \boldsymbol{w}, \boldsymbol{p})$ and compute a minimal perfect $p$-matching (as in Step 1).
       From the minimal perfect $p$-matching, transport requests are generated, which serve as input for the algorithm \PDPINSERT\ (as in Step 2).
 \item Redo these steps $\Delta$ times.
 \item Finally, we return the best found transportation schedule, i.e., the one with the smallest total tour length.
\end{enumerate}

Figure~\ref{fig: reopt} shows an example for the reoptimization step improving a transportation schedule stemming from the minimal perfect $p$-matching of Step 1.

The algorithm \REOPT\ is summarized in Algorithm~\ref{alg: static: reopt}.

\begin{algorithm}[ht]
\caption{\REOPT}
\label{alg: static: reopt}
\begin{algorithmic}[1]
  \Require{a Static Relocation Problem $(M, \z^0, \z^T, \zd, k, \capd)$, integers $N$, $\Delta$}
  \Ensure{a non-preemptive transportation schedule}
  \State{Find a minimal perfect $p$-matching (Step 1)}                                                                          \label{alg: static: reopt: 1}
  \While{$counter < \Delta$}                                                                                                    \label{alg: static: reopt: 2}
    \State{update $counter$}                                                                                                    \label{alg: static: reopt: 3}
    \State{construct $k$ tours by \PDPINSERT\ serving all transport requests (Step 2)}                                          \label{alg: static: reopt: 4}
    \State{rematch after withdrawing the N requests that have highest additional marginal costs in their tours (Step 3)}        \label{alg: static: reopt: 5}
  \EndWhile{}
  \State{\Return{transportation schedule of smallest found total tour length}}                                            \label{alg: static: reopt: 6}
\end{algorithmic}
\end{algorithm}

Finally, we give some comments about the complexity of the algorithm \REOPT.
The minimal perfect $p$-matching of Step~1 of the algorithm \REOPT\ (Section~\ref{sec: static: reopt: first step}), can be computed in polynomial time.
Since constructing an optimal transportation schedule from the minimal perfect $p$-matching (Section~\ref{sec: static: reopt: second step}) results in a dial-a-ride problem, this step is at least~\NPhard.
Thus, the second step cannot be solved in polynomial time unless $\mathcal{P} = \mathcal{NP}$ holds.
Therefore, the total complexity of \REOPT\ is at least in~$\mathcal{NP}$.

\section{Approximation factors}
\label{sec: static: reopt: approximation factor}

In this section, we consider several different situations and show that in most of these cases, the algorithm \REOPT achieves a finite approximation factor based on the capacity of the convoys.
We consider the symmetric and asymmetric situations, when there is only one depot on the system and when there are multiple depots.
In the case of multiple depots, we distinguish between tours with and without ``backhaul'', i.e., whether each driver has to return to its starting depot (with backhaul) or to any depot in the system (without backhaul).
Obviously, in the case of only a single depot in the system, the Static Relocation Problem with backhaul and without backhaul coincide.

Firstly, we consider the case when there is only one depot in the system (Section~\ref{sec: static: reopt: single depot}).
Hereby, we show that the approximation factor is equal in both cases, the symmetric and the asymmetric one.
Secondly, we consider multiple depots in the system (Section~\ref{sec: static: reopt: multiple depot}).
In this case, the symmetric and asymmetric situations become different, depending whether the transportation schedule contains only tours with or without backhaul.

Throughout this section, we assume that the time horizon is always large enough.

\subsection{Single depot}
\label{sec: static: reopt: single depot}

In this section, we show that in the case when there is only one depot in the system, \REOPT achieves a finite approximation factor based on the capacity of the convoys.

\begin{theorem}\label{thm: static: reopt: approximation factor}
For the Static Relocation Problem $(M, \z^0, \z^T, \zd, k, \capd)$ with one depot,
the algorithm \REOPT\ 
achieves an approximation factor of $\capd + 1$. 
\end{theorem}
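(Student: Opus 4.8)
The plan is to bound the total tour length of the \REOPT\ solution against the optimal total tour length by exploiting the convoy capacity~$\capd$. The key observation is that the optimal transportation schedule must serve every task, and the tasks are induced by the differences $z_v^0 - z_v^T$. First I would set up a lower bound on the optimal value. Since every car that is relocated must be physically carried from some overfull station to some underfull station by a convoy, and the matching computed in Step~1 is a \emph{minimum}-weight perfect $p$-matching, I would argue that the weight $\sum_{a\in A} w_a x_a$ of this matching is a lower bound (or can be related by a constant factor) to the total distance any feasible schedule must travel to move the cars. The idea is that each selected matching edge $v_o v_u$ with multiplicity $x_a$ forces at least $x_a$ units of ``transport distance'' $d(v_o,v_u)$ in any solution, but a convoy of capacity $\capd$ can carry up to $\capd$ cars along one traversal, so the matching weight divided by something, or the optimum, relates to this aggregated demand.

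Next I would construct an explicit feasible schedule from the matching whose length is controllable. The natural candidate is the tour(s) built by \PDPINSERT\ in Step~2, but for a clean worst-case bound I would instead analyze a canonical construction: for each transport request $(v_o, v_u, x)$ from the minimum matching, a driver leaves the single depot $v_D$, drives to $v_o$, picks up cars, drives to $v_u$, drops them, and returns to $v_D$, repeating as needed since $x$ may exceed $\capd$. Because there is only one depot, every such ``back-and-forth'' trip contributes $d(v_D,v_o) + d(v_o,v_u) + d(v_u,v_D)$, and by the triangle inequality in the (quasi) metric $M$ this is at most a constant multiple of $d(v_o,v_u)$ plus depot-incident terms. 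The factor $\capd + 1$ should emerge from the fact that $\lceil x/\capd\rceil$ convoy trips are needed to move $x$ cars along one request, and bounding $\lceil x/\capd\rceil \le x$ against the per-unit transport cost. Since \REOPT\ returns the best schedule found and never increases length beyond its starting schedule (the reoptimization step only improves), it suffices to bound this canonical schedule.

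I would then combine the pieces: show $\mathrm{len}(\text{\REOPT}) \le \mathrm{len}(\text{canonical schedule}) \le (\capd+1)\cdot\mathrm{OPT}$, where the lower bound on $\mathrm{OPT}$ comes from the matching weight together with the observation that any feasible schedule, to serve a demand of $x$ cars across a request, must itself incur at least the transport distance divided by capacity. The single-depot assumption is crucial here because it makes the depot-to-station detour terms collapse into a uniform additive structure that the triangle inequality handles symmetrically; I would lean on the fact that the approximation factor is claimed identical for the symmetric and asymmetric cases, which suggests the bound does not actually use symmetry and follows purely from the quasi-metric triangle inequality and the capacity-induced $\lceil x/\capd\rceil$ accounting.

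The main obstacle I anticipate is making the lower bound on $\mathrm{OPT}$ rigorous, specifically relating the minimum $p$-matching weight to the true optimum when the optimal solution may route cars through intermediate stations or consolidate several requests into one convoy trip. A convoy carrying $\capd$ cars simultaneously can amortize one traversal over many cars, so naively summing per-request distances overcounts by up to a factor of $\capd$; capturing this correctly is exactly where the $\capd+1$ factor must be argued tightly rather than lost. I would expect the proof to isolate the depot-incident travel (paid once per convoy deployment, hence controlled by the number of drivers $k$ and capacity) from the inter-station travel (controlled by the matching), and to show that the unavoidable inter-station transport in $\mathrm{OPT}$ already pays for a $1/\capd$ fraction of the matching weight, yielding the stated factor after adding the single depot round-trip overhead.
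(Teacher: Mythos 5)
There is a genuine gap, and it is in your upper-bound construction. Your canonical schedule serves each matching request $(v_o,v_u,x)$ by separate depot round trips, paying $\lceil x/\capd\rceil\bigl(d(v_D,v_o)+d(v_o,v_u)+d(v_u,v_D)\bigr)$ per request, and these depot-incident terms cannot be charged to the optimum. Concretely: take one depot at distance $D$ from a tight cluster of $n$ overfull and $n$ underfull stations, each with imbalance one and negligible pairwise distances. An optimal tour drives to the cluster once, shuttles the cars inside it, and returns, so its length is roughly $2D$; your canonical schedule pays roughly $2nD$ (even if you consolidate up to $\capd$ requests per depot trip, still roughly $2nD/\capd$), so the ratio grows like $n/\capd$ and is unbounded for every fixed $\capd$. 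No accounting of $\lceil x/\capd\rceil$ against per-unit transport cost can repair this, because the uncontrolled quantity is \emph{empty connection travel}, not loaded transport. This is exactly what the paper's proof is built to avoid: it never builds tours from the matching alone. It takes the \emph{optimal} tour, makes it uniform (one car per action), forms the tour graph whose arcs are the optimal tour's moves, attaches one transport arc per uniform matching request, and then constructs the comparison tour by shadowing the optimal tour's arcs and detouring along transport arcs (Algorithm~\ref{alg: static: reopt: construct tour}). The content of Lemma~\ref{lem: static: reopt: f leq c plus 1} and Lemma~\ref{cor: static: reopt: existence tour} is precisely that each arc of the optimal tour is then used at most $\capd+1$ times in total (tour traversals plus triangle-inequality estimates of the transport arcs), so \emph{all} connection travel, including travel to and from the depot, is paid for by arcs the optimum itself traverses. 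That structural shadowing of the optimal tour is the idea missing from your proposal, and it cannot be replaced by matching-weight accounting.

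Two secondary remarks. First, your lower-bound direction is essentially sound: since every relocated car travels from an overfull to an underfull station and a convoy carries at most $\capd$ cars per unit of travel, the minimum perfect $p$-matching weight is at most $\capd$ times the optimal length; the paper uses the same idea in a cleaner form by bounding the minimal matching against the ``close distance'' pairing read off the optimal tour (Lemma~\ref{lem: static: reopt: existence tour}). But this lower bound is useless on its own without an upper-bound construction whose connection travel is also charged to the optimum. Second, your claim that it suffices to bound the canonical schedule because ReOpt ``only improves'' tacitly assumes that \PDPINSERT\ returns a schedule no longer than your canonical one, which an insertion heuristic does not guarantee; the guarantee the paper actually proves is for the chain $\ell(\tourd^*)\leq\ell(\tourd^p)\leq\ell(\overline{\tourd})\leq(\capd+1)\,\ell(\tourd^*)$, where $\tourd^p$ is an \emph{optimal} solution of the pickup-and-delivery problem induced by the matching.
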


That \REOPT\ computes a non-preemptive transportation schedule for the Static Relocation Problem has already been observed in the previous section.
In order to prove the approximation factor, we first introduce some definitions, as well as state and prove some lemmas.

Firstly, from a given tour, we construct a new tour where each action picks up (resp.~drops) exactly one car.
Considering such tours only simplifies several technical issues, like estimating the number of consecutive pickup actions.
Secondly, we construct a new transportation schedule from an optimal transportation schedule and a minimal perfect $p$-matching.
Finally, we compare the lengths of an optimal transportation schedule~$\sched^*$, a transportation schedule $\sched^p$ derived from a minimal perfect $p$-matching, and the constructed transportation schedule $\sched^t$ and show that
\[
  \ell(\sched^p) \leq \ell(\sched^t) \leq (L + 1) \ell(\sched^*)
\]
holds, which proves the stated approximation factor.
Hereby, we construct the transportation schedule~$\sched^p$ by taking moves and actions of the optimal transportation schedule~$\sched^*$ and
by constructing moves from the transport requests of the minimal perfect $p$-matching.
Then, the approximation factor $(L + 1)$ emerges from the maximal number of moves corresponding to moves through the system in order to pickup cars and from moves which are serving a transport requests.

We start by showing how to construct a tour $\overline{\tourd}$ from a given tour $\tourd = (\move_1, \action_1, \dotsc, \action_{\ntourd - 1}, \move_{\ntourd})$, where in each action exactly one car is picked up or dropped.

For every $i \in \{ 1, \dotsc, \ntourd \}$ do
\begin{itemize}
 \item add move $\move_i$ to $\overline{\tourd}$,
 \item for every action~$\action_i = (j, v, x)$ where more than one car is picked up from a station~$v$ at time~$t_v$, we ``replace'' the action by actions each picking one car and waiting moves (with 0 waiting time) between these actions,
       i.e., if $x > 1$ then add the following $x$ actions and $x-1$ moves 
       $((j, v, 1), (j, v, t_v, v, t_v, x_{i} + 1), \dotsc, (j, v, t_v, v, t_v, x_{i} + x - 1), (j, v, 1))$ are added to $\overline{\tourd}$,
 \item for every action~$\action_i = (j, v, x)$ where more than one car is dropped at a station~$w$ at time~$t_w$, we ``replace'' the action by actions each picking one car and waiting moves (with 0 waiting time) between these actions,
       i.e., if $x < -1$ then add the following $x$ actions and $x-1$ moves 
       $((j, w, -1), (j, w, t_w, w, t_w, x_{i} - 1), \dotsc, (j, w, t_w, w, t_w, x_{i} - x  + 1), (j, w, -1))$ are added to $\overline{\tourd}$,
 \item every action $\action^i = (j, v, x)$ with $-1 \leq x \leq 1$ is added unchanged to $\overline{\tourd}$.
\end{itemize}
The tour $\overline{\tourd}$ is called a \emph{uniform tour} corresponding to $\tourd$.
A transportation schedule containing only uniform tours is called \emph{uniform transportation schedule}.

Note that a uniform tour is indeed a tour.
Furthermore, note that there exists exactly one uniform tour corresponding to a tour (if no unnecessary empty actions and waiting moves are added), 
but from a uniform tour, one can generally derive several non-uniform tours.

In this section, we consider non-preemptive transportation schedules, i.e., there does not exist a tour depending on another tour.
Therefore, empty actions can be safely removed from any tour in a transportation schedule (some moves may need to be adjusted accordingly).
For the rest of this section, we assume that no action is empty.

\begin{example}\label{ex: static: reopt: uniform tour}
Let us consider the graph and the dashed tour for driver~$1$ from the right side of Figure~\ref{fig: reopt}.
The tour is then given by
\begin{align*}
 \tourd = \{& \\
            & (1, v_D, 0, b, 4, 0), \\
            & \mathbf{(1, b, 2)}, \\
            & (1, b, 4, d, 6, 2), \\
            & (1, d, -1), \\
            & (1, d, 6, a, 8, 1), \\
            & (1, a, -1), \\
            & (1, a, 8, v_D, 11, 0) \\
          \},&
\end{align*}
and the corresponding uniform tour is then
\begin{align*}
 \overline{\tourd} = \{& \\
            & (1, v_D, 0, b, 4, 0), \\
            & \mathbf{(1, b, 1)}, \\
            & \mathbf{(1, b, 4, b, 4, 1)}, \\
            & \mathbf{(1, b, 1)}, \\
            & (1, b, 4, d, 6, 2), \\
            & (1, d, -1), \\
            & (1, d, 6, a, 8, 1), \\
            & (1, a, -1), \\
            & (1, a, 8, v_D, 11, 0) \\
          \}.&
\end{align*}
The ``replaced'' action is highlighted with bold fonts, all other actions $a \in \tact(\tourd)$ have already the form $\acnum(a) = \pm 1$.
\end{example}

The following lemma is a direct conclusion from the construction of an uniform tour.

\begin{lemma}\label{lem: static: reopt: max consecutive actions}
  Let $\overline{\tourd} = (\move_1, \action_1, \dotsc, \move_{\ntourd - 1}, \action_{\ntourd - 1}, \move_{\ntourd})$ be a uniform tour for driver~$j$.
  Then there are at most $\capd$ consecutive pickup (resp.~drop) actions in the sequence $\tact(\overline{\tourd})$.
\end{lemma}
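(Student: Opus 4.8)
The plan is to derive the bound directly from two definitional properties of tours: the capacity constraint $0 \le \mloadd(\move) \le \capd$ (move condition~\ref{def: enum: move: 0}) and the load-update rule $\mloadd(\move^{i+1}) = \mloadd(\move^{i}) + \acnum(\action^i)$ (tour condition~\ref{def: enum: tour: 5}). The feature of a uniform tour that makes this work is that every action satisfies $\acnum(\action^i) = +1$ (pickup) or $\acnum(\action^i) = -1$ (drop), so each action changes the convoy load by exactly one unit. Thus counting consecutive actions of one type is the same as tracking a monotone change in the load, which is bounded by the capacity.

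First I would isolate a maximal block of consecutive pickup actions, say $\action^i, \action^{i+1}, \dotsc, \action^{i+k-1}$, all with $\acnum = +1$, separated by the intervening moves $\move^{i+1}, \dotsc, \move^{i+k-1}$ (which, by the construction of $\overline{\tourd}$, may be waiting moves of zero travel time carrying the appropriate intermediate load). Applying the load-update rule repeatedly and telescoping gives $\mloadd(\move^{i+k}) = \mloadd(\move^{i}) + k$. Since $\mloadd(\move^{i}) \ge 0$ and $\mloadd(\move^{i+k}) \le \capd$ by the capacity constraint, I obtain $k = \mloadd(\move^{i+k}) - \mloadd(\move^{i}) \le \capd - 0 = \capd$, so at most $\capd$ consecutive pickup actions can occur.

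The argument for drop actions is entirely symmetric: a maximal block of $k$ consecutive drops yields, by the same telescoping, $\mloadd(\move^{i+k}) = \mloadd(\move^{i}) - k$, and now $\mloadd(\move^{i+k}) \ge 0$ together with $\mloadd(\move^{i}) \le \capd$ forces $k \le \capd$. Combining both cases proves the stated bound.

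I do not expect a genuine obstacle here, since the claim is essentially a telescoping consequence of the two constraints; uniformity is invoked only to guarantee unit load-changes, so that ``number of consecutive actions of one type'' coincides with ``net change in load.'' The only point requiring minor care is the bookkeeping of the inserted waiting moves: one must check that they carry exactly the intermediate load values, so that the capacity bound $0 \le \mloadd(\move) \le \capd$ genuinely holds at every move of the block and the telescoping estimate remains valid throughout. This is precisely guaranteed by the way the uniform tour $\overline{\tourd}$ was constructed.
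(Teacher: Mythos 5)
Your proof is correct, and it is essentially the argument the paper intends: the paper states this lemma without any proof, calling it a ``direct conclusion from the construction of a uniform tour,'' and your telescoping of the load-update rule $\mloadd(\move^{i+1}) = \mloadd(\move^{i}) + \acnum(\action^i)$ over a maximal block of unit-load actions, bounded by $0 \leq \mloadd(\move) \leq \capd$, is precisely the formalization of that implicit reasoning. Your closing observation is also accurate: the waiting moves inserted during the construction of $\overline{\tourd}$ carry loads strictly between the loads of the original surrounding moves, so the capacity bound holds at every move of the block and the telescoping goes through.
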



Next, we construct a graph~$G$ from a given tour, where the set of nodes corresponds to the actions, and the set of arcs to the moves of the tour.
Afterwards, we combine this graph with transport requests (leading to a graph~$G^t$), which then helps us to construct another tour (from this tour we finally gain the transportation schedule~$\sched^t$).
This constructed tour has some nice properties with respect to the number of traverses of each arc of~$G^t$, which finally helps us to prove our main result (Theorem~\ref{thm: static: reopt: approximation factor}).

From a given tour $\tourd = (\move_1, \action_1, \move_2, \dotsc, \action_{\ntourd-1}, \move_{\ntourd})$, we construct a directed weighted graph $G = (\Vpick \cup \Vdrop \cup \Vbal, A, w)$, where
\begin{enumerate}
 \item the set of nodes $V = \Vpick \cup \Vdrop \cup \Vbal$ corresponds to the actions in $\tourd$, $\Vpick$ corresponds to the set of pickup actions, $\Vdrop$ to the set of drop actions, and $\Vbal$ to an empty action at the depot $v_D$, i.e., $\Vbal = \{ (\cdot, v_D, 0) \}$;
 \item there is an arc from $v \in V$ to $v' \in V$ if $v = \action_j$ and $v' = \action_{j+1}$ for a $1 \leq j \leq \ntourd$, furthermore there is an arc from $(\cdot, v_D, 0)$ to $\action_1$ and from $\action_{\ntourd - 1}$ to $(\cdot, v_D, 0)$;
 \item the weight function $w$ corresponds to the distances between the origin and destination stations of the corresponding moves, i.e., we set $w(\action_j, \action_{j+1}) = d(\orig(\move_{j+1}), \dest(\move_{j+1}))$.
\end{enumerate}
We call such a graph a \emph{tour graph} for $\tourd$, the set $A$ is called the set of \emph{tour arcs}.

Note that one can assign to every arc $a \in A$ of a tour graph a move $\move \in \tourd$.
Then $\move$ is called the \emph{corresponding move} to $a$.

\begin{example}
The tour graph for the tour $\overline{\tourd}$ from Example~\ref{ex: static: reopt: uniform tour} is illustrated in Figure~\ref{fig: static: reopt: tour 1 action}.
\begin{figure}[ht]
    \centering
    \includegraphics[width=0.4\textwidth]{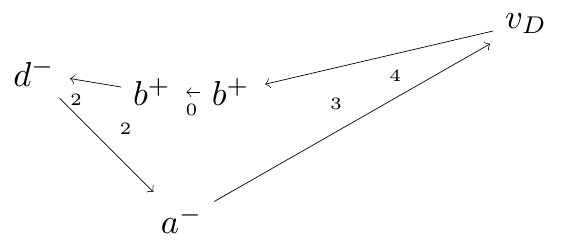}
 \caption{
  This figure shows the tour graph for the uniform tour $\overline{\tourd}$ from Example~\ref{ex: static: reopt: uniform tour}.
  The weights of the arcs correspond to the shortest distance between the connected two stations.
  In the tour graph, a pickup action at station $v \in V$ is denoted by $v^+$ and a drop action at $v$ by $v^-$.
 }
 \label{fig: static: reopt: tour 1 action}
\end{figure}
\end{example}

Analogously, to a uniform tour we now define a set of uniform transport requests.
A transport request $(v_i, w_i, x_i)$ is called \emph{uniform} if~$x_i = 1$.
Obviously, every set of transport requests can be transformed into a set of uniform transport requests by splitting every transport request $(v_i, w_i, x_i)$ into~$x_i$ uniform transport requests.
A set of transport requests $\TR$ is called \emph{set of uniform transport requests} if every transport request $r \in \TR$ is uniform.

To each uniform transport request $(v, w, 1) \in \TR$, we can now assign two actions for a driver~$j$, one pickup $(j, v, 1)$ and one drop action $(j, w, -1)$.
Hereby, every action is assigned to exactly one transport request.

Let $G = (V, A, w)$ be a tour graph and let $\TR$ be a set of transport requests.
Then we construct a directed weighted graph $G^t = (V, A \cup A^t, w)$, where
$A^t$ is the set of \emph{transport arcs}, which consist of arcs corresponding to the transport requests in $\TR$,
i.e., for a transport request $r \in \TR$ there is an arc between $v, v' \in V$ if $v$ is the assigned pickup action of $r$ and $v'$ the assigned drop action of $r$.
The weight of a transport arc is equal to the distance between the locations of the two assigned actions.
The graph~$G^t$ is called a \emph{transport graph} (see Figure~\ref{fig: static: reopt: transport graph} for an illustration).

\begin{figure}[ht]
    \centering
    \includegraphics[width=0.4\textwidth]{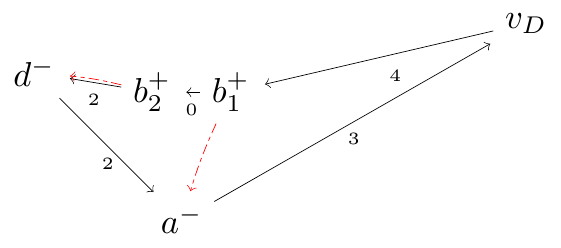}
 \caption{
  This figure shows a transport graph for the uniform tour $\overline{\tourd}$ from Example~\ref{ex: static: reopt: uniform tour} and the set of transport requests
  $\TR = \{ (b, d, 1), (b, a, 1) \}$.
  The weights of the arcs correspond to the shortest distance between the connected two stations.
  In the tour graph, a pickup action at station $v \in V$ is of the form $v^+$ and a drop action is of the form $v^-$.
  The dash-dotted arcs correspond to transport arcs.
 }
 \label{fig: static: reopt: transport graph}
\end{figure}

\begin{remark}\label{rem: static: reopt: construct tour from transport graph}
Let $G = (V, A, w)$ be a tour graph for a tour $\overline{\tourd}$ for driver~$j$ and let $\TR$ be a set of transport requests.
From a transport graph $G^t = (V, A \cup A^t, w)$ for $G$ and $\TR$, one can construct a new tour $\overline{\tourd}^t$ that serves all transport requests in $\TR$, as follows:
\begin{itemize}
 \item Start in the depot $v_D$ (resp.~the node corresponding to the depot).
 \item Consider the next tour arc $a \in A$ or non-traversed transport arc $a^t \in A^t$.
 \item If a tour arc $a = (v, w)$ is selected, we add a corresponding move~$m$ to~$\overline{\tourd}^t$ from $\aloc(v)$ to $\aloc(w)$ with $\loadd(m) = 0$.
 \item If necessary, add empty actions (or merge the moves).
 \item If a transport arc $a^t = (v, w)$ is selected, we add the pickup action~$v$, a move $(j, \aloc(v), \cdot, \aloc(w), \cdot, 1)$ and a drop action~$w$ to~$\overline{\tourd}^t$.
 \item When all transport requests are served, return to depot the by following tour arcs until the depot is reached.
\end{itemize}
The departure and arrival times of a move $\move_j$ are directly induces by the departure and arrival times of the moves preceding $\move_1, \dotsc, \move_{j-1}$.

Note that following this construction, we always construct a tour.
However, without further restrictions
this easy construction
does not ensure an upper bound on the number of traverses of an arc
(later in Algorithm~\ref{alg: static: reopt: construct tour} we give a refined construction which ensures an upper bound on the number of traverses of an arc).
\end{remark}

When we speak about a \emph{constructed tour} (from a transport graph~$G^t$ and a set of transport requests~$TR$),
we mean a tour ${\tourd}$ which is constructed using only the arcs from~$G^t$ and which serves all transport requests from~$\TR$.

\begin{example}\label{ex: static: reopt: tour from transport graph}
Let us consider the transport graph from Figure~\ref{fig: static: reopt: transport graph}.
A possible new tour serving all transport requests $\TR$ constructed from the transport graph, is then given by (see Figure~\ref{fig: static: reopt: tour from transport graph} for an illustration)
\begin{align*}
 \overline{\tourd}^t = \{& \\
            & (1, v_D, 0, b_1, 4, 0), \\
            & (1, b_1, 0), \\
            & (1, b_1, 0, b_2, 4, 0), \\
            & (1, b_2, 1), \\
            & (1, b_2, 4, d, 6, 1), \\
            & (1, d, -1), \\
            & (1, d, 6, a, 8, 0), \\
            & (1, a, 0), \\ 
            & (1, a, 8, v_D, 11, 0), \\
            & (1, v_D, 0), \\ 
            & (1, v_D, 11, b_1, 15, 0), \\
            & (1, b_1, 1), \\
            & (1, b_1, 15, a, 17, 1), \\
            & (1, a, -1), \\
            & (1, a, 17, v_D, 19, 0), \\
          \}.&
\end{align*}
Hereby, the stations $b_1$ and $b_2$ both correspond to the station~$b$.
However, for the sake of readability, we use $b_1$ and $b_2$ instead of~$b$.
\begin{figure}[ht]
    \centering
    \includegraphics[width=0.4\textwidth]{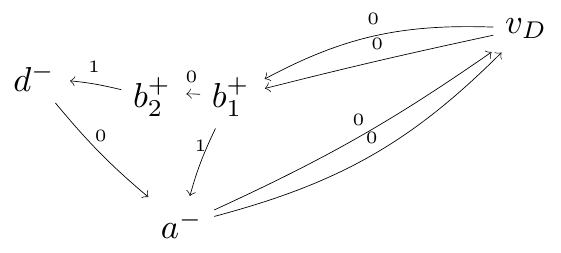}
 \caption{
  This figure shows a possible tour constructed from the transport graph of Figure~\ref{fig: static: reopt: transport graph} serving the transport requests $\TR = \{ (b, d, 1), (b, a, 1) \}$.
 }
 \label{fig: static: reopt: tour from transport graph}
\end{figure}
\end{example}

Our goal is to construct a new tour from a transport graph, constructed from an optimal tour, and from a set of transport requests, which is generated from a minimal perfect p-matching.
Then, we show that the approximation factors from Theorem~\ref{thm: static: reopt: approximation factor} hold for this constructed new tour.
For that we define a function which returns for each arc of a transport graph the number of traverses of the arc during the construction of the tour.

Let $\tourd$ be a uniform tour, $\TR$ be a set of transport requests and $G^t = (V, A \cup A^t, w)$ be a transport graph for $\tourd$ and $\TR$.
Furthermore, let $\tourd^t$ be a constructed tour from~$G^t$ and~$\TR$.
Then, we consider a so-called \emph{traverse counter function}, which is a function~$f^a : A \to \NN$ which reflects how often a tour arc $a \in A$ is traversed during the construction of $\tourd^t$.

With the function $f^a$, we counts the traverses of the tour arcs only, while we the request arcs are handled later.

\begin{example}
\label{ex: static: reopt: traverse counter function}
Let us consider the transport graph from Figure~\ref{fig: static: reopt: transport graph} 
and the tour $\overline{\tourd}^t$ from Example~\ref{ex: static: reopt: tour from transport graph}.
The traverse counter function $f^a$ is then
\begin{align*}
 f^a(v_D, b^+_1)   & = 2, \\
 f^a(b^+_1, b^+_2) & = 1, \\
 f^a(b^+_2, d^-)   & = 0, \\
 f^a(d^-, a^-)     & = 1, \\
 f^a(a^-, v_D)     & = 2. \\
\end{align*}
Note that there are two arcs in the transport graph from Figure~\ref{fig: static: reopt: transport graph} between $b_2^+$ and $d^-$, one tour arc and one request arc.
Hereby, the tour $\overline{\tourd}^t$ is constructed not by traversing the tour arc $(b^+_2, d^-)$
but instead by traversing the transport arc from~$b^+_2$ to~$d^-$ is traversed.
Thus, we have $f^a(b^+_2, d^-) = 0$.
\end{example}

\begin{lemma}
\label{lem: static: reopt: f leq c plus 1}
  Let ${\tourd}$ be a uniform tour starting and ending in depot~$v_D$, $\TR$ a set of transport requests, and $G^t = (\Vpick \cup \Vdrop \cup \Vbal, A \cup A^t, w)$ a transport graph for $\tourd$ and $\TR$.
  Furthermore, let $\overline{\tourd}$ be the constructed tour from Algorithm~\ref{alg: static: reopt: construct tour} and let $f^a$ be a traverse counter function for $\overline{\tourd}$.
  Then $f^a(a) \leq \capd + 1$ holds for all tour arcs $a \in A$.
  More specifically, we have
  \begin{enumerate}
   \item \label{lem: static: reopt: f leq c plus 1: 1} $f^a(a) \leq \capd + 1$ holds for all tour arcs $a = (v,w) \in A$ with $v \in \Vdrop$,
   \item \label{lem: static: reopt: f leq c plus 1: 2} $f^a(a') \leq \capd$ holds for all tour arcs $a' = (v', w') \in A$ with $v' \in \Vpick$.
  \end{enumerate}
\end{lemma}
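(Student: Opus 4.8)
The plan is to analyze the traverse counter $f^a$ through an Eulerian degree-balance argument on the transport graph $G^t$, exploiting that the tour produced by Algorithm~\ref{alg: static: reopt: construct tour} uses every transport arc exactly once and is a single closed walk based at the depot~$v_D$. First I would fix the cyclic order $\action_1, \dotsc, \action_{\ntourd-1}$ of actions along the tour arcs and write $\ell_j := \mloadd(\move_{j+1})$ for the convoy load that the original uniform tour~$\tourd$ carries on the move corresponding to the tour arc leaving $\action_j$. Balancing in- and out-traversals at each node then pins $f^a$ down completely: a pickup node $\action_j \in \Vpick$ carries one \emph{extra} outgoing arc (its transport arc), so its outgoing tour arc is used one time fewer than the incoming one, while a drop node $\action_j \in \Vdrop$ carries one extra incoming transport arc, so its outgoing tour arc is used one time more. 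Since the load $\ell_j$ rises by one at each pickup and falls by one at each drop, the quantity $f^a + \ell$ is constant along the whole tour; calling this constant $C$ (equivalently, the traversal count of the unique depot-outgoing arc, where $\ell = 0$ since the convoy leaves $v_D$ empty), I obtain the clean identity $f^a(\,(\action_j, \action_{j+1})\,) = C - \ell_j$.

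With this identity in hand, both bounds follow from the capacity constraint as soon as $C$ is controlled. The convoy capacity gives $0 \le \ell_j \le \capd$ for every~$j$; moreover directly after a pickup the convoy is nonempty, so $\ell_j \ge 1$ whenever $\action_j \in \Vpick$, while after a drop only $\ell_j \ge 0$ is guaranteed. Thus, provided $C \le \capd + 1$, a pickup-starting arc satisfies $f^a = C - \ell_j \le \capd$, which is part~\ref{lem: static: reopt: f leq c plus 1: 2}, and a drop-starting arc satisfies $f^a = C - \ell_j \le \capd + 1$, which is part~\ref{lem: static: reopt: f leq c plus 1: 1}. The entire lemma has thereby been reduced to the single estimate $C \le \capd + 1$ on the number of times the constructed tour departs the depot.

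The hard part will be exactly this estimate $C \le \capd + 1$, which is where \emph{connectivity} enters. The degree-balance argument by itself only certifies that the chosen arc multiset is Eulerian (equal in- and out-degree at every node); it does \emph{not} force these arcs to form one connected closed walk, and a naive realization may split into several loops that then have to be glued at the depot. Controlling $C$ is therefore the genuine content of the refined Algorithm~\ref{alg: static: reopt: construct tour}, as opposed to the unrestricted procedure of Remark~\ref{rem: static: reopt: construct tour from transport graph}, which carries no such guarantee. I would bound $C$ by induction over the sub-loops the algorithm closes off, invoking Lemma~\ref{lem: static: reopt: max consecutive actions}: because at most $\capd$ pickups occur in a row, the convoy can be filled and emptied in at most $\capd$ load-carrying passes, and one further pass suffices to reconnect the remainder to $v_D$, giving $C \le \capd + 1$. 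The asymmetry between the two parts of the lemma then needs no separate argument at all: it is forced by the identity $f^a = C - \ell_j$ alone, since a pickup leaves the convoy with $\ell_j \ge 1$ and thereby saves one traversal, whereas a drop may leave it empty.
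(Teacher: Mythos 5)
Your proposal follows essentially the same route as the paper's own proof: its Claims (the $\pm 1$ degree-balance relations at pickup and drop nodes, the conserved quantity $f^a(a) + t(a) = \text{const}$, and the identification of that constant $C$ with the traversal count of the depot arcs, where the load is zero and hence the maximum of $f^a$ is attained) are exactly your balancing argument, your identity $f^a = C - \ell_j$, and your observation that $\ell_j \geq 1$ after a pickup forces the gap between parts \ref{lem: static: reopt: f leq c plus 1: 1} and \ref{lem: static: reopt: f leq c plus 1: 2}. Both arguments then reduce the lemma to the single estimate $C \leq \capd + 1$ and discharge it by appealing to the greedy structure of Algorithm~\ref{alg: static: reopt: construct tour} (which never passes an unvisited pickup node without immediately serving its transport arc) together with Lemma~\ref{lem: static: reopt: max consecutive actions}, your sketch of this last step being at essentially the same level of informality as the paper's own concluding paragraph.
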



\begin{proof}

In the Example~\ref{ex: static: reopt: traverse counter function}, we can already make an important observation,
namely, nodes corresponding to a pickup action increase the number of traverses, nodes corresponding to a drop action decrease the number of traverses.
Next, we show that this observation is always true.

\begin{sublemma}
\label{lem: static: reopt: abs diff 1}
Let $a_1, a_2 \in A$ be two tour arcs with $a_1 = (v, w)$ and $a_2 = (w, u)$ (i.e., $a_2$ is the successive tour arc of $a_1$).
Then $\abs{f^a(a_1) - f^a(a_2)} = 1$ holds.
More specifically, it holds
\begin{enumerate}
 \item \label{lem: static: reopt: abs diff 1: 1} $f^a(a_1) - f^a(a_2) = 1$ if $w \in \Vpick$,
 \item \label{lem: static: reopt: abs diff 1: 2} $f^a(a_1) - f^a(a_2) = -1$ if $w \in \Vdrop$.
\end{enumerate}
\end{sublemma}


\begin{subproof}
The movement of a driver can be modeled by a flow $f : A \cup A^t \to \NN$.
Then the flow conservation constraint $\sum_{a \in \delta^+(v)} f(a) = \sum_{a \in \delta^-(v)} f(a)$ must hold.
Hereby, the transport arcs must be taken into consideration as well.
It is easy to see that there always exists a flow $f$ with $f(a) = f^a(a)$ for all tour arcs $a \in A$.
Since every node corresponding to a pickup action has exactly one outgoing transport arc and every drop action has exactly one incoming transport arc,
and due to the flow conservation constraint it follows that
\begin{enumerate}
 \item $f^a(a_1) - f^a(a_2) = 1$ if $w \in \Vpick$,
 \item $f^a(a_1) - f^a(a_2) = -1$ if $w \in \Vdrop$
\end{enumerate}
holds and, thus, proves the statement.
\end{subproof}

There are several consequences from this lemma.
Firstly, we show in the next corollary that the difference of the number of traverses of two different tour arcs can be bounded from above.
Secondly, we show a relation between the number of traverses of an arc and the number of cars that are transfered in the corresponding move within the constructed tour.

\begin{sublemma}
\label{cor: static: reopt: leq C}
Let $a_1, \dotsc, a_\tau \in A$ be successive tour arcs (i.e, $a_1 = (v_1, v_2), a_2 = (v_2, v_3), \dotsc, a_\tau = (v_\tau, v_{\tau+1})$), and
let $t : A \to \NN$ be a function that returns the number of cars transfered in the corresponding moves in $\overline{\tourd}$.
Then
\begin{enumerate}
 \item \label{cor: static: reopt: leq C: 1} $\abs{f^a(a_1) - f^a(a_\tau)} \leq \capd$, and
 \item \label{cor: static: reopt: leq C: 2} $f^a(a) + t(a) = {const}$ holds for all tour arcs~$a \in A$.
  Especially it holds $f^a(a) + t(a) = f^a(a') + t(a')$ for every pair of tour arcs $a, a' \in A$.
\end{enumerate}
\end{sublemma}


\begin{subproof}
``\ref{cor: static: reopt: leq C: 1}''
Lemma~\ref{lem: static: reopt: max consecutive actions} shows that in every tour there are maximal $\capd$ consecutive pickup actions and maximal $\capd$ consecutive drop actions.
Furthermore, the number of cars in a convoy must not exceed the capacity~$\capd$.
Since $\abs{f^a(a_j) - f^a(a_{j + 1})} = 1$ holds, the difference cannot be greater than~$\capd$.

``\ref{cor: static: reopt: leq C: 2}''
Let $a \in A$ be a tour arc and let $a' \in A$ be the successive tour arc.
We proof the statement with two cases, when the end node of $a$ is a node in $\Vpick$ and when it is a node in $\Vdrop$.

\textit{Case i} (the end node of $a$ is a node in $\Vpick$):
From Claim~\ref{lem: static: reopt: abs diff 1}~\ref{lem: static: reopt: abs diff 1: 1} it follows that $f^a(a) - f^a(a') = 1$.
Since the end node of $a$ corresponds to a pickup action, the number of cars transported in~$\overline{\tourd}$ is increased by $1$, i.e., we have $t(a) - t(a') = -1$.
Thus, it follows $f^a(a) - f^a(a') + t(a) - t(a') = 0$ and, therefore, $f^a(a) + t(a) = f^a(a') + t(a')$.

\textit{Case ii} (the end node of $a$ is a node in $\Vdrop$):
From Claim~\ref{lem: static: reopt: abs diff 1}~\ref{lem: static: reopt: abs diff 1: 2} it follows that $f^a(a) - f^a(a') = -1$.
Since the end node of $a$ corresponds to a drop action, the number of cars transported in~$\overline{\tourd}$ is decreased by $1$, i.e., we have $t(a) - t(a') = 1$.
Thus, it follows $f^a(a) - f^a(a') + t(a) - t(a') = 0$ and, therefore, $f^a(a) + t(a) = f^a(a') + t(a')$.

We have shown, that the statement holds for two successive arcs.
The statement for arbitrary pairs now follows by iteratively applying the two cases.
\end{subproof}

Next, we show that the maximum of a traverse counting function is always on the incoming and outgoing tour arcs of the depot.

\begin{sublemma}
\label{cor: static: reopt: f geq f and f gt f}
  Let $\Vbal = \{ v_D^= \}$.
  Then 
  \begin{enumerate}
  \item $f^a((v_D^=, \cdot)) = f^a((\cdot, v_D^=)) \geq f^a(a')$ for all tour arcs~$a' \in A$,
  \item $f^a((v_D^=, \cdot)) = f^a((\cdot, v_D^=)) > f^a(a')$ for all tour arcs~$a'' = (v, w) \in A$, with $v \in \Vpick$.
  \end{enumerate}
\end{sublemma}


\begin{subproof}
Let $t : A \to \NN$ be a function that returns the number of cars transfered in the corresponding moves in $\overline{\tourd}$.
From Lemma~\ref{lem: static: reopt: f leq c plus 1} \ref{lem: static: reopt: f leq c plus 1: 2} we know that $f^a(a) + t(a) = f^a(a') + t(a')$ for $a, a' \in A$.
At the beginning and end of a tour, the number of cars in a convoy is always~$0$ and, thus, we have $f^a((v_D^=, \cdot)) = f^a((\cdot, v_D^=)) = f^a(a) + t(a)$ for all $a \in A$.
The statements now follows since $t(a') \geq 0$ for all $a' \in A$ and $t(a'') \geq 1$ for all $a'' = (v, w) \in A$, with $v \in \Vpick$.
\end{subproof}

In Algorithm~\ref{alg: static: reopt: construct tour} we describe a specific construction for new tours from a given tour and a given set of transport requests.
For this new tour $f^a(a) \leq \capd + 1$ holds for all $a \in A$.
This can be seen as follows.

\begin{algorithm}[ht]
\caption{Construct new tour}
\label{alg: static: reopt: construct tour}
\begin{algorithmic}[1]
  \Require{a uniform tour $\overline{\tourd} = (\move_1, \action_1, \dotsc, \action_{\nu - 1}, \move_\nu)$ with backhaul for driver~$j$, the depot~$v_D$, a set of transport requests~$\TR$}
  \Ensure{a tour $\tourd$ serving all transport requests of~$\TR$}
  \State{construct transport graph $G^t = (\Vpick \cup \Vdrop \cup \Vbal, A \cup A^t, w)$}
  \State{initialize ${\tourd} \gets \emptyset$}
  \State{initialize $currNode \gets v_D^=$}                                                          \label{alg: static: reopt: construct tour: 4}
  \While{not every node in $\Vpick$ has been visited}
    \If{$currNode \in \Vpick$ \textbf{and} has not been visited}
      \State{mark $currNode$ as visited}
      \State{follow transport arc and add corresponding moves and actions to ${\tourd}$}
    \Else{}
      \State{follow tour arc and add corresponding move to ${\tourd}$}
    \EndIf{}
    \State{update $currNode$}
  \EndWhile{}
  \State{follow tour $\overline{\tourd}$ until arriving in depot node}
  \State{if necessary insert empty actions between two successive moves in ${\tourd}$}
  \State{\Return ${\tourd}$}
\end{algorithmic}
\end{algorithm}

It is fairly easy to see that Algorithm~\ref{alg: static: reopt: construct tour} follows basically the same steps as in Remark~\ref{rem: static: reopt: construct tour from transport graph}.
Furthermore, one can easily see that really a tour is constructed, serving all transport requests.
In contrast to the construction in Remark~\ref{rem: static: reopt: construct tour from transport graph}, Algorithm~\ref{alg: static: reopt: construct tour} ``follows'' the given tour only until it arrives at a non-visited node corresponding to a pickup action and directly serve the transport request.

Since the constructed tour in Algorithm~\ref{alg: static: reopt: construct tour} ``follows'' the moves from~$\tourd$ until the next non-visited pickup action,
it is ensured, that the number of traverses of an arc is not artificially increased.
Thus, the result follows with the help of Lemma~\ref{lem: static: reopt: max consecutive actions}, Claim~\ref{lem: static: reopt: abs diff 1}, Claim~\ref{cor: static: reopt: leq C} and Claim~\ref{cor: static: reopt: f geq f and f gt f}.
\end{proof}

\begin{remark} \label{rem: static: reopt: arbitrary start node}
  Although Algorithm~\ref{alg: static: reopt: construct tour} starts the construction of the tour from the depot, any arbitrary node could be used as a starting node.
  That could be done by removing line~\ref{alg: static: reopt: construct tour: 4} and giving $currNode$ as a parameter.
  When the construction is started with another node, Lemma~\ref{lem: static: reopt: f leq c plus 1} still holds, and so 
  does Lemma~\ref{cor: static: reopt: existence tour}.
\end{remark}


Let $\overline{\tourd}$ be a uniform tour, and let $\TR$ be a set of uniform transport requests so that $\overline{\tourd}$ serves all transport requests in $\TR$.
Furthermore, let $G^t = (\Vpick \cup \Vdrop \cup \Vbal, A \cup A^t)$ be a transport graph for $\overline{\tourd}$ and $\TR$.
Since every transport request $r = (v, v', 1) \in \TR$ is served by $\overline{\tourd}$, there exists a path of tour arcs in $G^t$ from the action $\action_1$ corresponding to~$v$ to the action $\action_l$ corresponding to~$v'$.
Let this path of tour arcs be $p(\action_1, \action_l) = (\action_1, \action_2), (\action_2, \action_3), \dotsc, (\action_{l-1}, \action_l)$.
Due to the triangle inequality we can estimate the length $w(\action_1, \action_l)$ of the transport arc by
\[
  w(\action_1, \action_l) \leq w(\action_1, \action_2) + \dotsm + w(\action_{l-1}, \action_l).
\]
Therefore, the length of all transport arcs can be estimated by applying above formula iteratively on all transport arcs,
\begin{equation}\label{eq: static: reopt: estimate length all transport arcs}
 \sum_{a \in A^t} w(a) \leq \sum_{a \in A^t} \sum_{(\action, \action') \in p(a)} w(\action, \action'),
\end{equation}
where $p(a)$ is the minimal path of tour arcs in $G^t$ between the corresponding actions.
Now we can consider a function $f^{\TR} : A \to \NN$, called \emph{transport estimate function},
where $f^{\TR}(a)$ shows how often the tour arc $a$ is used in the right hand side of Equation~\eqref{eq: static: reopt: estimate length all transport arcs}.
With~$F^{\TR}$ can reformulate Equation~\eqref{eq: static: reopt: estimate length all transport arcs} as
\begin{equation}\label{eq: static: reopt: estimate length all transport arcs: fTR}
 \sum_{a \in A^t} w(a) \leq \sum_{a \in A^t} \sum_{(\action, \action') \in p(a)} w(\action, \action') = \sum_{a \in A} f^{\TR}(a) w(a).
\end{equation}

It is easy to see, that the values of a transport estimate function depends on the choice of the set of transport requests~$\TR$.
In order to prove the main theorem, we consider a specific set of transport requests.
For that let $\overline{\tourd}$ be a uniform tour for a driver starting and ending in depot~$v_D$.
A set of uniform transport requests~$\TR$ so that
\begin{enumerate}
  \item \label{lem: static: reopt: existence tour: i}   $\overline{\tourd}$ serves all transport requests in~$\TR$,
  \item \label{lem: static: reopt: existence tour: ii}  for every transport request $r \in \TR$ there are at most~$\capd$ actions between two corresponding actions for $r$, and
  \item \label{lem: static: reopt: existence tour: iii} there does not exist a transport request $(\aloc(v), \aloc(w), 1) \in \TR$ so that the minimal path
                                                        from $\aloc(v)$ to $\aloc(w)$ of tour arcs traverses the tour arcs connecting the depot,
\end{enumerate}
holds is called a set of \emph{close distance uniform transport requests} for~$\overline{\tourd}$.

In the next lemma, we show that for every uniform tour there exists a set of close distance uniform transport requests.
Hereby, also the motivation for the choice of the name becomes clear as well.

\begin{lemma}
\label{lem: static: reopt: existence tour}
  Let $\overline{\tourd}$ be a uniform tour for a driver starting and ending in depot~$v_D$.
  Then there exists a set of close distance uniform transport requests for~$\overline{\tourd}$.
\end{lemma}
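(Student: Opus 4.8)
The plan is to exhibit the required set explicitly by pairing pickup and drop actions in the order in which the convoy loads and unloads cars, i.e.\ by a first-in-first-out rule. Since $\overline{\tourd}$ starts and ends at $v_D$ with an empty convoy, the running load (number of pickup actions minus number of drop actions seen so far) begins and ends at $0$, so the total number of pickup actions equals the total number of drop actions; call this number $q$. I would list the pickup actions $p_1,\dots,p_q$ and the drop actions $d_1,\dots,d_q$ in the order in which they occur along $\overline{\tourd}$, and for each $i$ define the uniform request $r_i=(\aloc(p_i),\aloc(d_i),1)$, assigning to it the pickup action $p_i$ and the drop action $d_i$. The claim is then that $\TR=\{r_1,\dots,r_q\}$ is a set of close distance uniform transport requests.

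I would first dispatch conditions~\ref{lem: static: reopt: existence tour: i} and~\ref{lem: static: reopt: existence tour: iii}, both of which follow from the sign pattern of the load. Feasibility forces the running load to stay nonnegative; hence at the moment $d_i$ is performed at least $i$ pickups have already occurred, so $p_i$ precedes $d_i$ along $\overline{\tourd}$ and every $r_i$ is genuinely served, giving~\ref{lem: static: reopt: existence tour: i} (uniformity is immediate from $x_i=1$). For~\ref{lem: static: reopt: existence tour: iii}, observe that at every visit of $\overline{\tourd}$ to the depot the convoy is empty, so equally many pickups and drops have been executed at that instant; therefore no index $i$ can have $p_i$ before such a depot visit and $d_i$ after it, since that would force a strictly positive load at the depot. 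Consequently the (unique, forward) path of tour arcs from $p_i$ to $d_i$ never uses the tour arcs incident to $v_D$, which is exactly~\ref{lem: static: reopt: existence tour: iii}.

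The substantive point is condition~\ref{lem: static: reopt: existence tour: ii}, and this is where I expect the real work. Under the first-in-first-out pairing, the pickup actions lying strictly between $p_i$ and $d_i$ are precisely the cars loaded after car $i$ but not yet unloaded when $d_i$ is reached; by the FIFO discipline all of these, together with car $i$ itself, sit in the convoy simultaneously immediately before $d_i$, so their number is at most $\capd-1$ by the capacity bound. A symmetric argument applied immediately after $p_i$ bounds the number of drop actions strictly between $p_i$ and $d_i$ by $\capd-1$ as well, and Lemma~\ref{lem: static: reopt: max consecutive actions} can be used to recast this in terms of (consecutive) actions, yielding the at-most-$\capd$ bound per action type demanded by~\ref{lem: static: reopt: existence tour: ii}. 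The main obstacle is precisely pinning down the counting convention in~\ref{lem: static: reopt: existence tour: ii}: a simple pigeonhole argument on a profile such as three pickups, two drops, two pickups, three drops shows that the \emph{total} number of intervening actions cannot in general be forced below $2\capd-2$, so the bound must be read per action type (equivalently, as a bound on the convoy occupancy straddling the pair), and the write-up has to make that reading explicit and verify that it is preserved under the earlier reduction to uniform tours and uniform requests.
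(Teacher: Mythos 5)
Your construction coincides with the paper's: the $i$-th pickup action is matched to the $i$-th drop action in order of occurrence along $\overline{\tourd}$, and your load-counting justifications of conditions \ref{lem: static: reopt: existence tour: i} and \ref{lem: static: reopt: existence tour: iii} are correct (the paper essentially just asserts both). The genuine divergence is in condition \ref{lem: static: reopt: existence tour: ii}. The paper argues by induction on the number of actions: delete the first pickup and the first drop, verify that what remains is still a tour (using Lemma~\ref{lem: static: reopt: max consecutive actions}), and invoke the induction hypothesis. You instead bound, directly via convoy occupancy, the pickups and the drops lying strictly between a matched pair by $\capd-1$ each.

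Your direct argument is correct, and your doubt about the literal reading of \ref{lem: static: reopt: existence tour: ii} is well founded: the paper's induction has a gap at exactly this point. The deleted first drop may lie strictly between a later pair $(p_i,d_i)$, in which case the number of intervening actions in the original tour exceeds the number in the reduced tour by one; the induction hypothesis therefore does not transfer unchanged, and unwinding the recursion yields only the bound $2\capd-2$, not $\capd$. Your profile (three pickups, two drops, two pickups, three drops, with $\capd=3$) is a feasible uniform tour on which \emph{every} admissible pairing leaves some pair with at least $2\capd-2=4$ intervening actions, so condition \ref{lem: static: reopt: existence tour: ii} as literally stated is unsatisfiable in general and no proof of the lemma in its current form can exist. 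What saves the downstream results is precisely the reformulation you propose: your per-type bound is equivalent to saying that at most $\capd$ requests straddle any tour arc, and for the first-in-first-out pairing the number of straddling requests equals the convoy load $t(a)$ on the corresponding move. This identity $f^{\TR}(a)=t(a)\le\capd$ is the only consequence of the lemma used in Lemma~\ref{cor: static: reopt: existence tour} and in the estimate $f^a(a)+f^{\TR}(a)\le\capd+1$ inside Theorem~\ref{thm: static: reopt: approximation factor: symmetric and asymmetric}, so the approximation factor is unaffected once condition \ref{lem: static: reopt: existence tour: ii} (and the lemma's proof) is amended as you suggest.
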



\begin{proof}
Let $\overline{\tourd} = (\move_1, \action_1, \dotsc, \move_{\ntourd - 1}, \action_{\ntourd - 1}, \move_{\ntourd})$.
Since a tour starts and ends in the depot and there must be no car in the depot it follows that
\begin{equation}\label{eq: static: reopt: existence tour: zero sum}
  \sum_{\action = (j, v, x) \in \tourd} x = 0
\end{equation}
holds for every tour $\tourd = (\move_1, \action_1, \dotsc, \move_{\ntourd - 1}, \action_{\ntourd - 1}, \move_{\ntourd})$.
Furthermore, if $\tourd$ contains actions, it follows that the first action is a pickup action and the last action a drop action.
Otherwise, there exists a move $\move = (j, \cdot, \cdot, \cdot, \cdot, x)$ with $x < 0$ or $x > \capd$ or the driver transfers vehicles into the depot, contradicting the definition of a tour.

We construct a set of transportation requests $\TR$ by assigning the station of the first pickup action in $\overline{\tourd}$ to the station of the first drop action in $\overline{\tourd}$,
the station of the second pickup action in $\overline{\tourd}$ to the station of the second drop action in $\overline{\tourd}$, and so forth until all actions are assigned.
Since the number of cars picked up or dropped in an action in $\overline{\tourd}$ is exactly one and due to Equation~\eqref{eq: static: reopt: existence tour: zero sum},
there exists a pickup action for a drop action.
Thus, it also follows that the number of actions is even in $\overline{\tourd}$.


Let $r \in \TR$ be a transportation request and let $\action$ be the corresponding pickup and $\action'$ the corresponding drop action for $r$.
Using this construction, we show that there are at most $\capd$ actions between $\action$ and $\action'$.

We prove the statement by induction over the number of actions~$\ntourd$ in~$\overline{\tourd}$.
For $\ntourd = 0$ and $\ntourd = 2$ there are $0 < \capd$ actions between $\action$ and $\action'$ proving the base case.

Let us assume that the induction hypothesis holds, i.e., there are at most $\capd$ actions between $\action$ and $\action'$ for all tours with $\ntourd$ actions.

Next, we prove the inductive step.
For that, let $\overline{\tourd}$ have $\ntourd + 2$ actions.

We construct a sequence of $\ntourd$ actions and $\ntourd + 1$ moves from~$\overline{\tourd}$ by removing the first pickup and the first drop action from~$\overline{\tourd}$.
Afterwards, we show that this sequence is a tour.
Then the statement follows from the induction hypothesis.

Since every action is non-empty, $\action^1$ is the first pickup action.
Let~$\action_\ell$ be the first drop action.
Furthermore, let $\move_1 = (j, v_1, t_{v_1}, w_1, t_{w_1}, x_1)$, $\move_2 = (j, v_2, t_{v_2}, w_2, t_{w_2}, x_2)$
and $\move_\ell = (j, v_\ell, t_{v_\ell}, w_\ell, t_{w_\ell}, x_\ell)$, $\move_{\ell + 1} = (j, v_{\ell + 1}, t_{v_{\ell + 1}}, w_{\ell + 1}, t_{w_{\ell + 1}}, x_{\ell + 1})$.

First, let us consider the following new moves $\hat{\move}_{1,2} = (j, v_1, t_{v_1}, w_2, t_{w_2}, x_1)$ and 
$\hat{\move}_{\ell, \ell+1} = (j, v_\ell, t_{v_\ell}, w_{\ell + 1}, t_{w_{\ell + 1}}, x_{\ell + 1})$ and
let $\hat{\move}_i = (j, v_i, t_{v_i}, w_i, t_{w_i}, x_i - 1)$, for all $3 \leq i \leq \ell - 1$, be a move constructed from the move $\move_i = (j, v_i, t_{v_i}, w_i, t_{w_i}, x_i)$.
Finally, let $\hat{\move}^\iota = \move^\iota$ for all $\ell + 2 \leq \iota \leq \ntourd + 2$.

Then $\tourd' = (\hat{\move}_{1,2}, \action_2, \hat{\move}_3, \dotsc, \action_{\ell - 1}, \hat{\move}_{\ell, \ell + 1}, \action_{\ell+1}, \hat{\move}_{\ell+2}, \dotsc, \action_{\ntourd - 1}, \hat{\move}_{\ntourd})$
is an alternative sequence of $\ntourd + 1$ moves and $\ntourd$ actions.
We show that $\tourd'$ is indeed a tour.
It is sufficient to show for every move $\hat{\move}_i = (j, \cdot, \cdot, \cdot, \cdot, x_i)$ in~$\tourd'$ that $0 \leq x_i \leq \capd$ holds.

For that, we consider the number ${n}^+$ of consecutive pickup actions before the move~${\move}_{\ell, \ell+1}$ and 
number ${n}^-$ of consecutive drop actions directly after~${\move}_{\ell, \ell+1}$ in~$\tourd$,
as well as
the number $\hat{n}^+$ of consecutive pickup actions before the move~$\hat{\move}_{\ell, \ell+1}$ and 
number $\hat{n}^-$ of consecutive drop actions directly after~$\hat{\move}_{\ell, \ell+1}$ in~$\hat{\tourd}$.
Since $\overline{\tourd}$ is a tour, it follows from Lemma~\ref{lem: static: reopt: max consecutive actions} that $0 \leq n^+ - n^- \leq \capd$ holds.
From the construction of $\tourd'$ it follows
\[
  n^+ - n^- = (\hat{n}^+ + 1) - (\hat{n}^- + 1) = \hat{n}^+ - \hat{n}^-
\]
and, thus, we have $0 \leq \hat{n}^+ - \hat{n}^- \leq \capd$.
Furthermore, it follows that the number of cars in the convoy of driver~$j$ are equal in both sequences in and after the move~$\move_{\ell + 1}$ and $\hat{\move}_{\ell,\ell + 1}$, respectively,
i.e., for every $\ell + 2 \leq i \leq \ntourd + 2$ we have $0 \leq x_i \leq \capd$, where $x_i$ is the number of cars in the move $\hat{\move}_i$.

Then, it follows for every move $\hat{\move}_i = (j, \cdot, \cdot, \cdot, \cdot, x_i)$ in~$\tourd'$ that $0 \leq x_i \leq \capd$ holds,
and, thus, that $\tourd'$ is a tour with $\ntourd$ actions.
Therefore, the induction hypothesis can be applied to $\tourd'$ and it follows that there are at most $\capd$ actions between the corresponding pickup and drop actions of a transportation request.
Since there are at most $\capd$ consecutive pickup actions at the start of a tour (Lemma~\ref{lem: static: reopt: max consecutive actions}), it follows that there are at most $\capd$ actions between the first pickup action and the first drop action.
This proves the inductive step and the statement follows.

Since we assign a transport request to every pickup and drop action, the tour serves all transport requests in $\TR$.

Finally, it follows by construction that there does not exist a transport request $(\aloc(v), \aloc(w), 1$ so that the minimal path from $\aloc(v)$ to $\aloc(w)$ of tour arcs traverses the tour arcs connecting the depot.
\end{proof}

\begin{lemma}
\label{cor: static: reopt: existence tour}
  Let $\overline{\tourd} = (\move_1, \action_1, \dotsc, \move_{\ntourd - 1}, \action_{\ntourd - 1}, \move_{\ntourd})$ be a uniform tour starting and ending in depot~$v_D$,
  and let $G = (V, A, w)$ be a tour graph for $\overline{\tourd}$.
  Furthermore, let $t : A \to \NN$ be a function that returns the number of cars transfered in the corresponding moves in $\overline{\tourd}$,
  and let~$\TR$ be a set of close distance uniform transport requests for~$\overline{\tourd}$.
  Then for a transport estimate function $f^{\TR}$ for~$\TR$ it holds
  \begin{enumerate}
   \item \label{cor: static: reopt: existence tour: 1} $f^{\TR}(a) = t(a)$ for all $a \in A$, and
   \item \label{cor: static: reopt: existence tour: 2} $f^{a}(a) + f^{\TR}(a) = f^{a}(a') + f^{\TR}(a')$ for all $a, a' \in A$.
  \end{enumerate}
\end{lemma}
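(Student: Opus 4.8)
The plan is to prove part~\ref{cor: static: reopt: existence tour: 1} first and then deduce part~\ref{cor: static: reopt: existence tour: 2} from it in one line. Indeed, once $f^{\TR}(a)=t(a)$ is known for every $a\in A$, I would substitute this into the invariant $f^{a}(a)+t(a)=f^{a}(a')+t(a')$ supplied by Claim~\ref{cor: static: reopt: leq C}~\ref{cor: static: reopt: leq C: 2}, which immediately yields $f^{a}(a)+f^{\TR}(a)=f^{a}(a')+f^{\TR}(a')$. Hence all the real work sits in the single identity $f^{\TR}=t$.

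First I would unwind the definition of $f^{\TR}$: it counts the transport arcs $r\in A^t$ whose minimal tour-arc path $p(r)$ passes through $a$. Since $\overline{\tourd}$ serves all of $\TR$ and every pickup and every drop action is assigned to exactly one request, $\TR$ induces a pairing of the pickup actions of $\tact(\overline{\tourd})$ with its drop actions, each action used exactly once. The tour graph $G$ is a directed cycle through the depot node, so the unique directed tour-arc path from the assigned pickup $\action_p$ of a request to its assigned drop $\action_d$ either stays among the action nodes (if $p<d$) or wraps around through the depot (if $p>d$). Condition~\ref{lem: static: reopt: existence tour: iii} of a close distance set forbids the latter, so for every request the pickup precedes the drop and $p(r)$ is exactly the forward segment $(\action_p,\action_{p+1}),\dots,(\action_{d-1},\action_d)$.

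Next I would count. Fix a tour arc $a=(\action_k,\action_{k+1})$ and let $P_k$ and $D_k$ be the numbers of pickup and of drop actions among $\action_1,\dots,\action_k$; uniformity of $\overline{\tourd}$ gives $t(a)=P_k-D_k$. By the previous paragraph, $a\in p(r)$ holds precisely when the pickup of $r$ sits at a position $\le k$ while its drop sits at a position $>k$, so $f^{\TR}(a)$ equals the number of requests ``open'' across $a$. The key point, which I expect to be the main obstacle to phrase cleanly, is a bijection: every drop among the first $k$ actions is matched to a pickup that strictly precedes it and therefore also lies among the first $k$, so the requests lying entirely within $\action_1,\dots,\action_k$ are in bijection (via $r\mapsto$ its drop) with those $D_k$ drops. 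Consequently exactly $D_k$ of the $P_k$ early pickups are already closed, leaving $P_k-D_k$ open, i.e.\ $f^{\TR}(a)=P_k-D_k=t(a)$. The two arcs incident to the depot fit the same formula, since there $t=0$ and, by condition~\ref{lem: static: reopt: existence tour: iii}, no $p(r)$ traverses them.

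The only delicate point is that bijection step, which rests entirely on ``pickup precedes drop'' extracted from condition~\ref{lem: static: reopt: existence tour: iii}; note that condition~\ref{lem: static: reopt: existence tour: ii} (at most $\capd$ actions between paired actions) plays no role here. Everything else is bookkeeping, and part~\ref{cor: static: reopt: existence tour: 2} then follows as indicated at the outset.
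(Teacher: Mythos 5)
Your proof is correct, but it reaches the key identity $f^{\TR}=t$ by a different mechanism than the paper. The paper argues incrementally: using condition~\ref{lem: static: reopt: existence tour: iii} it first pins down the depot-incident arcs (where $f^{\TR}=0=t$), then shows that across each node of $\Vpick$ both $f^{\TR}$ and $t$ change by $+1$ and across each node of $\Vdrop$ both change by $-1$, so equality propagates arc by arc along the tour; this produces local difference relations for $f^{\TR}$ that exactly mirror Claim~\ref{lem: static: reopt: abs diff 1} for the traverse counter $f^a$. You instead evaluate both quantities in closed form at an arbitrary arc $a=(\action_k,\action_{k+1})$: writing $P_k$, $D_k$ for the numbers of pickups and drops among the first $k$ actions, uniformity gives $t(a)=P_k-D_k$, and your bijection (requests contained in the first $k$ actions correspond, via their drops, to exactly the $D_k$ early drops) gives $f^{\TR}(a)=P_k-D_k$ as the number of requests still open across $a$. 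Both arguments rest on the same two pillars --- each action is assigned to exactly one request, and condition~\ref{lem: static: reopt: existence tour: iii} forces every request path $p(r)$ to be the forward, depot-avoiding segment from its pickup to its drop --- and both obtain part~\ref{cor: static: reopt: existence tour: 2} by the same one-line substitution of $f^{\TR}=t$ into Claim~\ref{cor: static: reopt: leq C}~\ref{cor: static: reopt: leq C: 2}. What your route buys is brevity and transparency: it needs no induction along the tour, and it makes explicit that only conditions \ref{lem: static: reopt: existence tour: i} and~\ref{lem: static: reopt: existence tour: iii} of the close-distance definition are used, a fact the paper leaves implicit. What the paper's route buys is the intermediate difference relations themselves, which exhibit the structural parallelism between $f^{\TR}$ and $f^a$ that motivates combining them in the proof of Theorem~\ref{thm: static: reopt: approximation factor: symmetric and asymmetric}.
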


\begin{proof}
%
Firstly, we show that for a transport graph $G^t = (\Vpick \cup \Vdrop \cup \Vbal, A \cup A^t, w)$ for $\overline{\tourd}$ and $\TR$ and a transport estimate function $f^{\TR}$ for $G^t$ 
  \begin{enumerate}[label=(\alph*),ref=(\alph*)]
      \item \label{lem: static: reopt: existence tour: iii: i} $f^{\TR}(a) - f^{\TR}(a') = 1$ for all $a = (v, w) \in A$, $a' = (w, u) \in A$ with $w \in \Vdrop$, and
      \item \label{lem: static: reopt: existence tour: iii: ii} $f^{\TR}(a) - f^{\TR}(a') = -1$ for all $a = (v, w) \in A$, $a' = (w, u) \in A$ with $w \in \Vpick$,
  \end{enumerate}
holds.
Within the proof, we also show that~\ref{cor: static: reopt: existence tour: 1} holds.
Secondly, the Statement~\ref{cor: static: reopt: existence tour: 2} follows from Claim~\ref{cor: static: reopt: leq C} \ref{cor: static: reopt: leq C: 2}.


To prove the third statement, we show that $f^{\TR}(a) = x_a$, where $\move_a = (j, \cdot, \cdot, \cdot, \cdot, x_a)$ is the move corresponding to the tour arc~$a = (v, w) \in A$.
This can be seen as follows.
Since $\TR$ is a set of close distance transport requests, there does not exist a transport request $(\aloc(v), \aloc(w), 1)$ so that the minimal path of tour arcs traverses the tour arcs connecting the depot.
Thus, we have $f^\TR(a) = f^\TR(a') = 0$ where $a = (v_D, v_1) \in A$ and $a' = (\cdot, v_D) \in A$.
Since $\overline{\tourd}$ is a tour, the node $v \in V$ corresponds to a pickup action.
Thus, for the move $\move_{a_1} = (j, \cdot, \cdot, \cdot, \cdot, x_{a_1})$ corresponding to the tour arc~$a_1 = (v_1, v_2)$, we have $x_{a_1} = 1$.
Furthermore, the tour arc $a_1$ appears once on the right hand side of Equation~\eqref{eq: static: reopt: estimate length all transport arcs}, i.e., $f^\TR(a_1) = 1 = x_{a_1}$.

If $v_2 \in \Vpick$ corresponds to a pickup action then the number of cars transfered from $v_2$ to the next station is increased by one.
Since the destination of the transport request that started in $v_1$ does not correspond to $v_2$,
the corresponding tour arc $a_2$ appears twice on the right hand side of Equation~\eqref{eq: static: reopt: estimate length all transport arcs}:
once due to the transport request $(v_1, \cdot, 1)$ and once due to $(v_2, \cdot, 1)$.

Analogously, if $v_2 \in \Vdrop$ corresponds to a drop action, the number of cars transfered from $v_2$ to the next station is decreased by one.
Since the transport arc corresponding to the transport request $(v_1, v_2, 1)$ ends in $v_2$, the number of appearances of the tour arc $a_2 = (v_2, \cdot) \in A$
on the right hand side of Equation~\eqref{eq: static: reopt: estimate length all transport arcs} is decreased as well.

In the first case we have $f^{\TR}(a_1) - f^{\TR}(a_2) = -1$ and in the second case $f^{\TR}(a_1) - f^{\TR}(a_2) = 1$.

The above arguments can be applied iteratively to all nodes in $\Vpick \cup \Vdrop$, showing that~\ref{lem: static: reopt: existence tour: iii: i} and~\ref{lem: static: reopt: existence tour: iii: ii} hold.

Since the values of the transport estimate function $f^\TR$ corresponds to the number of cars transfered in the corresponding move,
the stament follows directly from Claim~\ref{cor: static: reopt: leq C} \ref{cor: static: reopt: leq C: 2}.
\end{proof}

Finally, we prove the main theorem of this section.

\begin{theorem} \label{thm: static: reopt: approximation factor: symmetric and asymmetric}
For the Static Relocation Problem $(G,\z^0,\z^T,\zd,k,L)$ with one depot,
the algorithm \REOPT\
achieves an approximation factor of $\capd + 1$ for all $\capd \in \NN$.
This approximation factor holds in the symmetric and in the asymmetric case.
\end{theorem}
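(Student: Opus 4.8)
The plan is to bound the length of an auxiliary schedule $\sched^t$, built from an optimal schedule, against $\ell(\sched^*)$, and then to relate REOPT's output $\sched^p$ to $\sched^t$. First I would take an optimal non\nobreakdash-preemptive schedule $\sched^*$ and replace each of its tours by the corresponding uniform tour; since this only splits multi\nobreakdash-car actions into single\nobreakdash-car actions joined by zero\nobreakdash-time waiting moves, no tour length changes and I may assume $\sched^*$ is uniform. For each driver's tour $\overline{\tourd}_j$ I apply Lemma~\ref{lem: static: reopt: existence tour} to obtain a set $\TR_j$ of close distance uniform transport requests served by $\overline{\tourd}_j$, and I collect $\TR=\bigcup_j \TR_j$. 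Because every pickup (resp.\ drop) action of $\sched^*$ is assigned to exactly one request and $\sched^*$ serves all tasks, the requests incident to each overfull (resp.\ underfull) station $v$ total $p_v$, so $\TR$ is a perfect $p$\nobreakdash-matching.

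Next I would build $\sched^t$ by running Algorithm~\ref{alg: static: reopt: construct tour} on each tour graph of $\overline{\tourd}_j$ together with $\TR_j$. The length of the resulting tour splits into the traversed tour arcs, counted by $f^a_j$, and the transport arcs, each used once:
\[
  \ell(\overline{\tourd}^t_j) = \sum_{a \in A_j} f^a_j(a)\, w(a) + \sum_{a^t \in A^t_j} w(a^t).
\]
Applying the triangle\nobreakdash-inequality estimate~\eqref{eq: static: reopt: estimate length all transport arcs: fTR} to the transport arcs gives $\sum_{a^t} w(a^t) \le \sum_{a \in A_j} f^{\TR}_j(a)\, w(a)$, whence
\[
  \ell(\overline{\tourd}^t_j) \le \sum_{a \in A_j} \bigl(f^a_j(a) + f^{\TR}_j(a)\bigr)\, w(a).
\]
By Lemma~\ref{cor: static: reopt: existence tour}~\ref{cor: static: reopt: existence tour: 2} the quantity $f^a_j(a)+f^{\TR}_j(a)$ is the same constant for every tour arc $a\in A_j$; evaluating it on the depot arcs, where no car is carried, together with Lemma~\ref{lem: static: reopt: f leq c plus 1} and Claim~\ref{cor: static: reopt: f geq f and f gt f}, shows this constant is at most $\capd+1$. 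Hence $\ell(\overline{\tourd}^t_j) \le (\capd+1)\sum_{a\in A_j} w(a) = (\capd+1)\,\ell(\overline{\tourd}_j)$, and summing over all drivers yields $\ell(\sched^t)\le(\capd+1)\,\ell(\sched^*)$.

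It then remains to compare REOPT's output with $\sched^t$. The requests $\TR$ serving $\sched^t$ form a perfect $p$\nobreakdash-matching whose total weight is exactly $\sum_{a^t}w(a^t)$; since Step~1 of REOPT computes a \emph{minimal} perfect $p$\nobreakdash-matching, the matching underlying $\sched^p$ has no larger weight, and carrying this through the identical construction gives $\ell(\sched^p)\le\ell(\sched^t)$. Chaining the two inequalities proves $\ell(\sched^p)\le(\capd+1)\,\ell(\sched^*)$, i.e.\ the approximation factor $\capd+1$. Finally, because the only metric property invoked in this chain is the triangle inequality --- used solely in~\eqref{eq: static: reopt: estimate length all transport arcs: fTR} to bound transport\nobreakdash-arc lengths --- while symmetry of $d$ is never used, the identical argument applies verbatim in the asymmetric case. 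The hard part will be the first inequality $\ell(\sched^p)\le\ell(\sched^t)$: the auxiliary schedule $\sched^t$ inherits the tour \emph{structure} of the optimum, whereas $\sched^p$ is assembled from the minimal matching alone, so one must argue carefully that minimality of the matching dominates the whole tour length and not merely the transport\nobreakdash-arc contribution.
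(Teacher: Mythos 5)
Your machinery is the paper's machinery --- uniformizing $\sched^*$, extracting close distance uniform transport requests $\TR$ via Lemma~\ref{lem: static: reopt: existence tour}, running Algorithm~\ref{alg: static: reopt: construct tour}, and bounding the result by $\sum_{a\in A}\bigl(f^a(a)+f^{\TR}(a)\bigr)w(a)\le(\capd+1)\,\ell(\sched^*)$ --- but the last link of your chain is a genuine gap, and you flag it yourself. You build the auxiliary schedule $\sched^t$ with the \emph{close distance} requests $\TR$ as transport arcs, so $\sched^t$ serves $\TR$, not the minimal matching $\TR^p$ that the output of \REOPT\ serves. Hence $\ell(\sched^p)\le\ell(\sched^t)$ cannot follow from any optimality-of-$\sched^p$ argument (the two schedules serve different request sets), and ``carrying the matching inequality through the identical construction'' does not work either: by Claim~\ref{cor: static: reopt: f geq f and f gt f}, the traverse counter of a constructed tour satisfies $f^a(a)=c-t(a)$ where the constant $c$ is the number of traverses of the depot arcs, and for the construction run with $\TR^p$ versus the one run with $\TR$ these constants need not be ordered; so $w(\TR^p)\le w(\TR)$ alone does not bound the tour-arc contribution of one construction by that of the other.

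The paper closes this hole by reversing the roles of the two request sets. Its auxiliary tour $\overline{\tourd}$ is constructed from the tour graph of the optimal tour together with the \emph{minimal matching} requests $\TR^p$ as transport arcs. Then (i) $\overline{\tourd}$ serves $\TR^p$, so the minimum-length tour $\tourd^p$ serving $\TR^p$ (which is what \REOPT\ is taken to produce) satisfies $\ell(\tourd^*)\le\ell(\tourd^p)\le\ell(\overline{\tourd})$ with no further argument; and (ii) in the length estimate the transport-arc term is bounded by $w(\TR^p)\le w(\TR)\le\sum_{a\in A}f^{\TR}(a)\,w(a)$ --- matching minimality first, triangle inequality second --- while the tour-arc term still obeys $f^a(a)+f^{\TR}(a)=f^a(a)+t(a)\le\capd+1$, because $f^{\TR}=t$ by Lemma~\ref{cor: static: reopt: existence tour}~\ref{cor: static: reopt: existence tour: 1} and the constancy property is pinned down at the depot arc where $t=0$. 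The close distance requests thus appear only inside the estimate, never as the request set actually served. Note also that your per-driver decomposition would break under this fix: a request of $\TR^p$ may pair a pickup in one tour with a drop in another, which is why the paper first merges all $k$ tours into a single tour (possible since there is one depot and $T$ is large, at no increase of length by the triangle inequality) and only then runs the construction. Your closing observation --- that only the directed triangle inequality and never symmetry enters the argument --- is correct and is exactly the paper's justification for the asymmetric case.
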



\begin{proof}
We start by proving a special case when there is only one driver in the system.
Afterwards, we generalize this special case to the general situation when there are $k$ drivers in the system.

Let $\tourd^*$ be an optimal tour for $(G,\z^0,\z^T,\zd,1,\capd)$.
Let $\TR^p$ be a set of transportation requests induced by a minimal perfect $p$-matching, and let $\tourd^p$ an optimal tour serving all transport requests in $\TR^p$,
i.e., a tour with a minimal total tour length serving all transport requests in $\TR^p$.
Finally, let $\overline{\tourd}$ be the constructed tour from Algorithm~\ref{alg: static: reopt: construct tour}.
Then we have
\[
 \ell(\tourd^*) \leq \ell(\tourd^p) \leq \ell(\overline{\tourd})
\]
where $\ell(\tourd)$ is the total tour length of the tour $\tourd$.
Thus, we only need to show that
\begin{equation}
  \ell(\overline{\tourd}) \leq (\capd + 1) \ell(\tourd^*)
\end{equation}
holds.

Let $\TR$ be a set of close distance uniform transport requests.
Since $\TR^p$ is induced by a minimal perfect $p$-matching, it holds
\[
 \sum_{(v, w, 1) \in \TR^p} d(v, w) \leq \sum_{(v, w, 1) \in \TR} d(v, w)
\]

Let $f^a$ be a traverse counter function for $\tourd^*$ and let $f^{\TR}$ be a transport estimate function.
Let $t : A \to \NN$ be a function that returns the number of cars transfered in the corresponding moves in $\tourd^*$.
By definition, a convoy is empty at the beginning of a tour, and therefore, $t(a_0) = 0$ for $a_0 = (v_D, \cdot) \in A$.
Since we have $t(a) = f^{\TR}(a)$ for all $a \in A$ (Lemma~\ref{cor: static: reopt: existence tour}~\ref{cor: static: reopt: existence tour: 1}),
it follows $f^a(a_0) + f^{\TR}(a_0) = f^a(a_0)$.
From Lemma~\ref{lem: static: reopt: f leq c plus 1} we know that $f^a(a) \leq \capd + 1$ for all $a \in A$,
and thus, it follows from Lemma~\ref{cor: static: reopt: existence tour}~\ref{cor: static: reopt: existence tour: 2} that $f^a(a) + f^{\TR}(a) \leq \capd + 1$ holds for all $a \in A$.

With above and Equation~\eqref{eq: static: reopt: estimate length all transport arcs: fTR} we can estimate the total tour length $\ell(\overline{\tourd})$ by
\begin{align*}
 \ell(\overline{\tourd}) & = \sum_{a \in A} f^a(a) w(a) + \sum_{a^t \in A^t} w(a^t) \\
                         & \leq \sum_{a \in A} f^a(a) w(a) + \sum_{a \in A} f^{\TR}(a) w(a)  \\
                         & = \sum_{a \in A} \left(f^a(a) + f^{\TR}(a) \right) w(a) \\
                         & \leq (\capd + 1) \sum_{a \in A} w(a) \\
                         & = (\capd + 1) \cdot \ell(\tourd^*)
\end{align*}
proving the statement of the theorem if there is only one driver in the system.

Next we consider the general case, i.e., there are $k \in \NN$ drivers.
Since there is only one depot, and every tour starts and ends in the depot, all tours can be ``merged'' to one tour.
For that let $\schedule = (\tourd^1, \dotsc, \tourd^k)$ be a transportation schedule with $k$ tours and with $\tourd^j = (\move^j_1, \action^j_1, \dotsc, \action^j_{\lambda_j-1}, \move^j_{\lambda_j})$.
Since by assumption~$T$ is large enough, we can construct a new tour $\tourd = \tourd^1 \tourd^2 \dotsc \tourd^k$,
where (except for the first and last tour) the moves to and from the depot are replaced by one move between
the succeeding and preceding station, respectively (see Figure~\ref{fig: static: reopt: k tour one driver} for an illustration).
Thus, all $k$ tours are performed by only one driver.
Due to the triangle inequality it follows that the length of the tour~$\tourd$ is at most the total tour length of the transportation schedule $\schedule$.
Therefore, we can generalize the statement when there are~$k$ drivers in the system.

\begin{figure}[ht]
    \centering
    \includegraphics[width=0.4\textwidth]{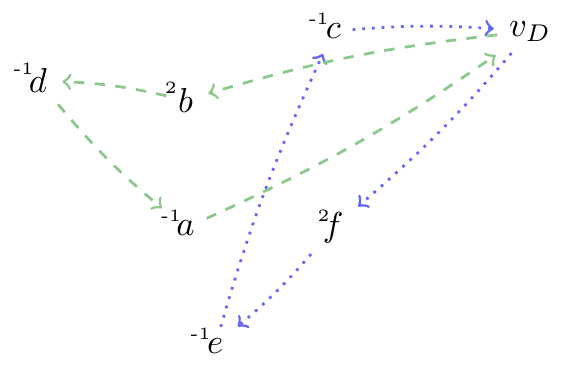}
    \includegraphics[width=0.4\textwidth]{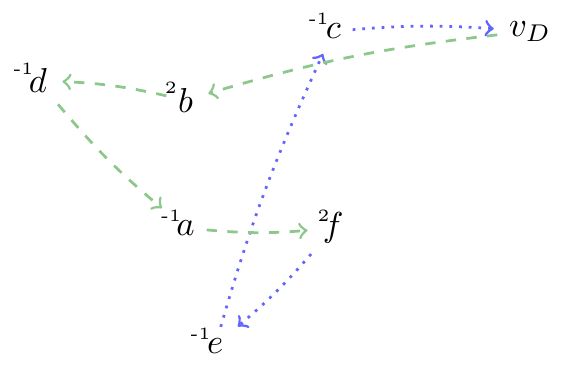}
 \caption{
  This figure illustrates how one driver can perform $k$ tours if there is only one depot.
  On the left side, the figure illustrates a transportation schedule with 2 drivers,
  On the right side, a combined transportation schedule with only one driver.
 }
 \label{fig: static: reopt: k tour one driver}
\end{figure}

Finally, note that we constructed a new tour by following either the tour arcs (induced by an optimal tour) or the transport request arcs (induced by the minimal perfect $p$-matching).
Especially, no back arc is traversed.
Thus, it follows that this approximation factor is also valid for the asymmetric case when there is only one depot.
\end{proof}

\subsection{Multiple depots}\label{sec: static: reopt: multiple depot}

In the case that there is only one depot, the algorithm \REOPT has an approximation factor of $\capd + 1$, independently from the number of drivers,
or whether we have the symmetric or asymmetric case.
In this section, we consider the case when there are multiple depots.
Hereby, we can further distinguish between two situations: every tour must start and end in the same depot (Static Relocation Problem with multiple depots and with backhaul),
and every tour can start and end in a different depot (Static Relocation Problem with multiple depots and without backhaul).

Firstly, we consider the situation where every tour can start and end in a different depot (i.e., tours without backhaul).
We show that in the asymmetric case, the algorithm \REOPT generally does not have an approximation factor.
However, we give for a special case an approximation factor of $\capd + 1$, namely when in the optimal tour every driver returns to the depot from where its tour started.
Note that this special case makes an assumption to the properties of an optimal tour,
while the case when we have backhaul forces the tours to have certain properties in order to be a feasible solution for the Static Relocation Problem.
In the symmetric case without backhaul, the approximation factor is $2\cdot(\capd + 1)$.
We gain the factor~$2$ since the algorithm \REOPT ensures that all drivers return to their starting depot.

Secondly, we consider the situation when every tour must end in its starting depot (i.e., tours with backhaul).
In this situation, we can show that the algorithm \REOPT has an approximation factor of $\capd + 1$ in the symmetric and asymmetric situation.
We show this approximation factor, by applying nearly the same steps as we do in previous section,
where we prove the approximation factor for the case when there is only a single depot.
In fact, we mainly modify the construction of the transport graph (Definition~\ref{def: static: reopt: multiple: transport graph}) and Algorithm~\ref{alg: static: reopt: construct tour},
so that they are able to handle not only one tour but multiple tours (Algorithm~\ref{alg: static: reopt: multiple: construct tour multiple depot}).

In order to show, that there does not exist an approximation factor for the algorithm \REOPT in the asymmetric case, even when there is only one driver, we consider the following example.

\begin{example}\label{ex: static: reopt: multiple: asymmetric non possible}
We consider the asymmetric situation with two depots $v_D^1$ and $v_D^2$ and four stations $V = \{ a, b, c, d \}$.
There is only one driver in the system, the convoy capacity is~$1$.
In the stations $a$ and $c$, one car has to be picked up, and at $b$ and $d$ one car has to be dropped.
We consider the following distances
\begin{align*}
  d(v_D^1, a) & = d(b, c) = d(d, v_D^2) = \dist(v_D^1, v_D^2) = 1 \\
  d(a, b) & = d(c, d) = 2 \\
  d(c, b) & = d(a, d) = 1
\end{align*}
and for all other pairs of stations $v$ and $w$ we set
\[
 d(v, w) = x,
\]
with $x \gg 2$ arbitrary.
An optimal tour $\tourd^*$ serving all requests is induced by the path $(v_D^1, a, b, c, d, v_D^2)$.
The total tour length $\ell(\tourd^*) = 5$.

\begin{figure}[ht]
    \centering
    \includegraphics[width=0.25\textwidth]{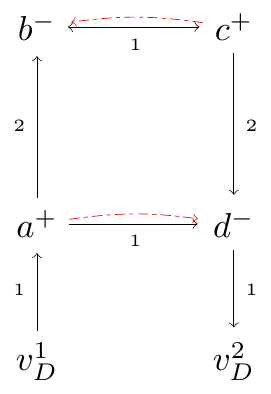}
 \caption{
  This figure shows a schedule graph for an optimal tour ${\tourd^*}$ and the set of minimal transport requests $\TR = \{ (c, b, 1), (d, a, 1) \}$.
  The arcs, with their numbers below, correspond to the distance between the connected stations.
  For all other distances, i.e., for the distances of the non-visible arcs, the distances are~$x \gg 1$.
  In the schedule graph, a pickup action at station $v \in V$ is denoted by $v^+$ and a drop action by $v^-$.
  The dash-dotted arcs correspond to transport arcs.
 }
 \label{fig: static: reopt: multiple: asymmetric 01}
\end{figure}

A set of transport requests induced by a minimal perfect $p$-matching is given by $\TR = \{ (c, b, 1), (d, a, 1) \}$.
The total tour length $\ell(\tourd)$ for every tour, serving the transport requests in $\TR$, is at least~$x + 4$.
Since $x$ was arbitrarily selected, there cannot exist a constant $c \in \RR$ so that
\[
  x + 4 \leq \ell(\tourd) \leq c \cdot \ell(\tourd^*) = c \cdot 5
\]
holds.
\end{example}

The reason why we cannot give an approximation factor in the previous example is that we have to traverse at least once an arc that is not traversed by an optimal tour.
Note that this example holds for all algorithms which compute tours using all arcs from a minimal perfect $p$-matching.
An open question is whether there exists a deterministic algorithm for the Static Relocation Problem with multiple depots without backhaul with a constant approximation factor.

However, in order to prove an approximation factor for the a special case of the Asymmetric Static Relocation Problem with multiple depots without backhaul,
where in the optimal transportation schedule every driver returns to its starting depot,
and in order to prove the approximation factor for the Symmetric Static Relocation Problem with multiple depots without backhaul,
we have to give some definitions and further results.

First, we define a graph similar to a transport graph, but in this case it is constructed from a transportation schedule instead of a tour.
As in the transport graph, the nodes correspond to the actions, and the arcs correspond to moves.

\begin{definition}\label{def: static: reopt: multiple: transport graph}
From a given transportation schedule $\schedule = (\tourd^1, \dotsc, \tourd^k)$, 
where $\tourd^j = (\move_1^j, \action_1^j, \dotsc, \action_{\ntourd^j - 1}^j, \move_{\ntourd^j}^j)$
we can construct a weighted graph $G = (\Vpick \cup \Vdrop \cup \Vbal, A, w)$,
with $\Vpick = \Vpick^1 \cup \dotsm \cup \Vpick^k$, $\Vdrop^1 \cup \dotsm \cup \Vdrop^k$, $\Vbal = \Vbal^1 \cup \dotsm \cup \Vbal^k$ and $A = A^1 \cup \dots \cup A^k$,
where
\begin{enumerate}
 \item the sets of nodes $V^j = \Vpick^j \cup \Vdrop^j$ correspond to the actions in~$\tourd^j$; the set of \emph{pickup nodes} $\Vpick^j$ corresponds to the set of pickup actions,
        the set of \emph{drop nodes} $\Vdrop^j$ to the set of drop actions,
        the set of \emph{depot nodes} $\Vbal^j$ to the set containing empty actions at the depots, i.e., $\Vbal^j = \{ (j, v_D^j, 0) \mid \tourd^j \text{ starts in } v_D^j \}$;
 \item there is an arc from $v \in V^j$ to $v' \in V^j$ if $v = \action_i^j$ and $v' = \action_{i+1}^j$ for a $1 \leq i \leq \ntourd^j$, furthermore there is an arc from $(j, v_D^j, 0)$ to $\action_1^j$ and from $\action_{\ntourd - 1}^j$ to $(j, v_D^j, 0)$;
 \item the weight function $w$ corresponds to the distances between the origin and destination stations of the corresponding moves, i.e., we set $w(\action_i^j, \action_{i+1}^j) = d(\orig(\move_{i+1}^j), \dest(\move_{i+1}^j))$.
\end{enumerate}
We call such a graph a \emph{schedule graph} for $\schedule$, the set $A$ is called the set of \emph{schedule tour arcs}.

Let $\TR$ be a set of transport requests.
Then a graph $G^{tm} = (\Vpick \cup \Vdrop \cup \Vbal, A \cup A^t, w)$ is called \emph{transport schedule graph}
if $(\Vpick \cup \Vdrop \cup \Vbal, A, w)$ is a schedule graph and the set $A^t$ corresponds to the transport requests,
i.e., to every transport request $r = (\aloc(\action), \aloc(\action'), 1) \in \TR$ there is an arc $(\action, \action') \in A^t$ assigned to~$r$.
\end{definition}

Let $\schedule = (\tourd^1, \dotsc, \tourd^k)$ be a transportation schedule, let $\TR$ be a set of transport requests, and let $G^{tm} = (\Vpick \cup \Vdrop \cup \Vbal, A \cup A^t, w)$ be the corresponding transport schedule graph.
Then we can construct a directed graph from a transport schedule graph, which has as nodes the tours and as arcs the transport requests between the tours.
Laxly said, this graph ``highlights'' the transport requests which are between two different tours.
This simplified version of a transport schedule graph enables us to prove some properties which we need in order to gain an approximation factor for \REOPT which is independent from the number of depots in the system.
Formally, we define a directed graph $G^s = (V^s, A^s)$, with
\begin{itemize}
 \item each node $v^s \in V^s$ corresponds to a tour $\tourd \in \schedule$, and
 \item every arc corresponds to a transport request $(\aloc(\action), \aloc(\action'), 1) \in \TR$, where $\action$ and $\action'$ are in different tours, i.e., $\action \in \Vpick^i \cup \Vdrop^i$, $\action' \in \Vpick^j \cup \Vdrop^j$, with $i \neq j$.
\end{itemize}
We call the graph~$G^s$ a \emph{tour connection graph} for $G^{tm}$ (see Figure~\ref{fig: static: reopt: multiple: tour connection graph} for an illustration).
Note, that a tour connection graph is loop free, but there can be multiple arcs between two nodes.

\begin{figure}[ht]
    \centering
    \includegraphics[width=0.65\textwidth]{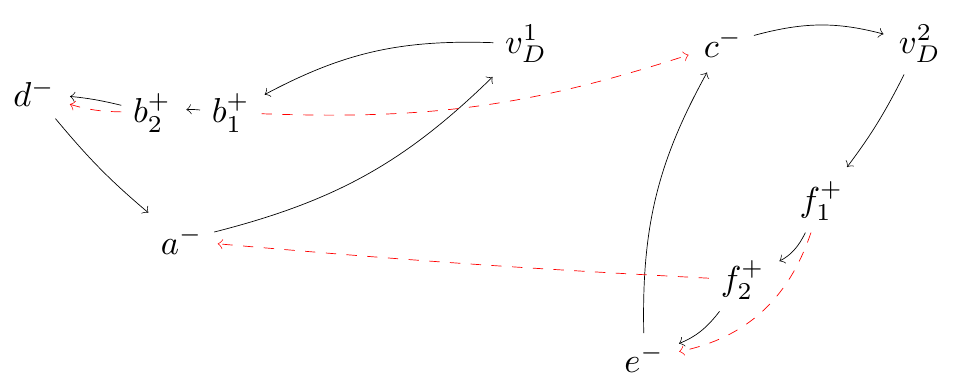}
 \caption{
  This figure shows a tour connection graph constructed from the two tours shown in Figure~\ref{fig: reopt} and from the transport requests induced by an optimal solution (see Figure~\ref{fig: flownetwork}).
  Solid arcs correspond to the schedule tour arcs, and the dashed arcs correspond to the transport requests.
  In the graph, a pickup action at station $v \in V$ is denoted by $v^+$ and a drop action by $v^-$.
  The nodes $b^+_1$ and $b^+_2$ correspond both to a pickup action at station~$b$ (and analog for the nodes $f^+_1$ and $f^+_2$).
  Furthermore, note that both tours start in the same depot, but in a tour connection graph, there is a depot for each tour.
 }
 \label{fig: static: reopt: multiple: tour connection graph}
\end{figure}


\begin{lemma}\label{lem: static: reopt: multiple: tour connection graph: connected thus strongly}
Let $\schedule = (\tourd^1, \dotsc, \tourd^k)$ be a transportation schedule,
let $\TR$ be a set of transport requests, and let $G^{tm} = (\Vpick \cup \Vdrop \cup \Vbal, A \cup A^t, w)$ be a corresponding transport schedule graph.
Furthermore, let $G^s$ be a tour connection graph for $G^{tm}$.
Then it is true
\begin{enumerate}
 \item\label{lem: static: reopt: multiple: tour connection graph: connected thus strongly: 1} every connected component in $G^s$ is also strongly connected,
 \item\label{lem: static: reopt: multiple: tour connection graph: connected thus strongly: 2} if every tour in $\schedule$ is a tour with backhaul, then every connected component in $G^{tm}$ is also strongly connected.
\end{enumerate}
\end{lemma}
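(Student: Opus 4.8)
The plan is to derive both statements from one structural fact: the tour connection graph $G^s$ is \emph{balanced}, i.e., each node has equal in- and out-degree. I would establish this by a counting argument on a single tour $\tourd^j$. Because $\tourd^j$ starts and ends in a depot with an empty convoy, Equation~\eqref{eq: static: reopt: existence tour: zero sum} forces the number $p_j$ of pickup actions in $\tourd^j$ to equal the number $d_j$ of drop actions. Under the bijective assignment of actions to uniform transport requests, every pickup action is the source of exactly one transport arc and every drop action the target of exactly one transport arc. Writing $I_j$ for the number of requests whose assigned pickup \emph{and} drop both lie in $\tourd^j$, the out-arcs of node $j$ in $G^s$ are precisely the $p_j - I_j$ requests with pickup in $\tourd^j$ but drop elsewhere, while the in-arcs are the $d_j - I_j$ requests with drop in $\tourd^j$ but pickup elsewhere. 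Since $p_j = d_j$, node $j$ is balanced.

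For Statement~(i) I would apply the elementary fact that every weakly connected balanced digraph is strongly connected. Fixing a node $v$ of a connected component and letting $R$ be the set of nodes reachable from $v$, no arc leaves $R$; balance then forces the number of arcs entering $R$ to equal the number leaving it, hence no arc enters $R$ either. Were $R$ a proper subset of the component, this would contradict weak connectivity, so $R$ is the whole component. As $v$ was arbitrary, the component is strongly connected.

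For Statement~(ii) I would lift this conclusion from $G^s$ to $G^{tm}$ using the backhaul hypothesis. With backhaul each tour $\tourd^j$ returns to the \emph{same} depot node, so its schedule tour arcs form the directed cycle $v_D^j \to \action_1^j \to \cdots \to \action_{\ntourd^j - 1}^j \to v_D^j$, which is itself strongly connected. Contracting each such cycle to a point identifies $G^{tm}$ with $G^s$, the within-tour transport arcs collapsing to loops that are discarded; hence the connected components of $G^{tm}$ are exactly the unions of tours comprising the components of $G^s$. Given $u,v$ in one component, I would build a directed $u$-to-$v$ walk by moving inside a tour along its cycle to the tail of a suitable transport arc, crossing to the next tour, and iterating along a directed $G^s$-path supplied by Statement~(i); reversing the roles gives a $v$-to-$u$ walk, proving strong connectivity.

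The main obstacle is the degree bookkeeping in the first step: one must pin down that the action-to-request assignment is a bijection so that $p_j - I_j$ and $d_j - I_j$ really are the out- and in-degrees, and that internal requests cancel symmetrically, leaving only the empty-convoy identity $p_j = d_j$. The lifting in Statement~(ii) is then routine, its one essential subtlety being that strong connectivity of each tour truly needs backhaul; without it a tour is a directed path, not a cycle, the contraction fails, and Example~\ref{ex: static: reopt: multiple: asymmetric non possible} confirms that the conclusion---and the resulting approximation guarantee---can break down.
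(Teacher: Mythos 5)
Your proof is correct and takes essentially the same route as the paper's: both establish that every node of $G^s$ has equal in- and out-degree (balance), conclude that each connected component is strongly connected (the paper cites Euler's theorem where you prove the balanced-digraph reachability fact directly, which is the same fact), and for statement (ii) both use that backhaul turns each tour's arcs into a directed cycle and combine this with statement (i). If anything, your degree bookkeeping via $p_j = d_j$ and the cancellation of the $I_j$ internal requests is more precise than the paper's brief evenness argument.
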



\begin{proof}
``\ref{lem: static: reopt: multiple: tour connection graph: connected thus strongly: 1}'': 
we prove that for every node~$v^s \in V^s$ and for every outgoing arc $a^s \in A^s$ of~$v^s$ there exists an incoming arc $a^s \in A^s$ of~$v^s$.
Every transport arc $a^t = (v, w) \in A^t$ connects exactly two actions.
Whenever~$v$ corresponds to an action in one tour~$\tourd$ and~$w$ corresponds to an action in another tour~$\tourd'$, there is an arc $a^s \in A^s$.
Since every action in~$\schedule$ is uniform, it follows that the number of non-empty actions in every tour is even.
Therefore, there must exist a transport arc $(v', w') \in A^t$, with $w' \in \tourd$ and $v' \in \tourd'' \neq \tourd$, i.e., there is an incoming arc in $A^s$.
Thus, we have $\abs{\delta^+(v^s)} = \abs{\delta^-(v^s)}$ for all $v^s \in V^s$.

From Euler's Theorem it now follows that in every connected component of~$G^s$ there exists an Eulerian walk.
Especially, it follows that every connected component of~$G^s$ is also strongly connected.

``\ref{lem: static: reopt: multiple: tour connection graph: connected thus strongly: 2}'': 
since $\schedule$ is a transportation schedule with backhaul, every tour of $\schedule$ is represented by a cycle of tour arcs in~$G$ and, thus, a strongly connected component.
From the two statements above it follows that every connected component in~$G^{tm}$ is also strongly connected.
\end{proof}

Note, in the asymmetric case without backhaul, connected components in~$G^{tm}$ are generally not strongly connected (e.g., see Example~\ref{ex: static: reopt: multiple: asymmetric non possible}).
Furthermore, whenever we modify this graph so that every connected component becomes strongly connected, at least one arc cannot be estimated in general.
Later we see that in the symmetric case, we can modify~$G^s$ so that every connected component in~$G^s$ is also strongly connected and every added arc can be estimated.
This is achieved by connecting some actions with the ``original'' depot.
In the symmetric case, an approximation factor can then still be computed since the distance from~$v$ to~$w$ is equal to the distance from~$w$ to~$v$.

\begin{algorithm}[ht]
\caption{Construct new transportation schedule (asymmetric case with multiple depots and without backhaul)}
\label{alg: static: reopt: multiple: construct tour multiple depot}
\begin{algorithmic}[1]
  \Require{a uniform transportation schedule $\schedule$ with backhaul, a set of depots~$\VD$, a set of uniform transport requests~$\TR$}
  \Ensure{a uniform transportation schedule $\overline{\schedule}$ with backhaul}
  \State{construct transport schedule graph $G = (\Vpick^1 \cup \dotsm \cup \Vpick^k \cup \Vdrop \cup \Vbal, A \cup A^t, w)$}
  \State{initialize $\overline{\tourd}^j \gets \emptyset$ for all $j \in \{ 1, \dotsc, k \}$}
  \For{$j \in \{ 1, \dotsc, k \}$}                                                                                                      \Comment{loop through every tour}
    \State{initialize $currNode \gets v_D^{j}$}                                                                                         \Comment{tour $\tourd^j$ starts in depot $v_D^j$}
    \State{initialize $currTour \gets j$}                                                                                               \Comment{the currently considered tour}
    \While{not every node in $\Vpick^{currTour}$ has been visited}
      \If{$currNode \in \Vpick^{currTour}$ \textbf{and} has not been visited}                                                              \label{alg: static: reopt: multiple: construct tour multiple depot: 8}
        \State{mark $currNode$ as visited}                                                                                              \label{alg: static: reopt: multiple: construct tour multiple depot: 9}
        \State{follow transport arc and add corresponding moves and actions to $\overline{\tourd}$}
        \State{update $currTour$ if tour is changed}                                                                                    \label{alg: static: reopt: multiple: construct tour multiple depot: 14}
      \Else{}
        \State{follow tour arc and add corresponding move to $\overline{\tourd}$}                                                       \label{alg: static: reopt: multiple: construct tour multiple depot: 16}
      \EndIf{}
    \State{update $currNode$}
    \EndWhile{}
    \State{follow current tour until arriving in depot}
  \EndFor
  \State{insert empty actions between two successive moves if necessary}
  \State{\Return $\overline{\schedule} = (\overline{\tourd}_1, \dotsc, \overline{\tourd}_k)$}
\end{algorithmic}
\end{algorithm}

Algorithm~\ref{alg: static: reopt: multiple: construct tour multiple depot} constructs a new transportation schedule with backhaul from a given uniform transportation schedule with backhaul,
a set of depots and a set of uniform transport requests.
Hereby, the algorithm always prioritizes following transport request arcs over following tour arcs.
Similarly to the constructed tour from the previous section, we state and prove some properties of this transportation schedule.

It is easy to see that Algorithm~\ref{alg: static: reopt: multiple: construct tour multiple depot} constructs a new transportation schedule.
However, it is not obvious that in this transportation schedule all drivers return to their original depot, which we show in the next lemma.

\begin{lemma}\label{lem: static: reopt: multiple: connected eq strongly}
Let $\schedule$ be a uniform transportation schedule with backhaul, let $\VD$ be a set of depots and let $\TR$ be a set of uniform transport requests.
Then Algorithm~\ref{alg: static: reopt: multiple: construct tour multiple depot} constructs a new transportation schedule with backhaul.
\end{lemma}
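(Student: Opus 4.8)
The plan is to show that the tour $\overline{\tourd}^j$ produced in iteration~$j$ of Algorithm~\ref{alg: static: reopt: multiple: construct tour multiple depot} starts and ends in the same depot~$v_D^j$. Since the construction initializes $currNode \gets v_D^j$, the tour certainly starts in~$v_D^j$; and since the final step only follows tour arcs until it reaches the depot node of the tour stored in $currTour$ at the moment the \textbf{while}-loop terminates, it suffices to prove that $currTour = j$ holds at that moment. I would therefore track the variable $currTour$ throughout the traversal and argue that it returns to its initial value~$j$.

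First I would observe that $currTour$ is modified only in line~\ref{alg: static: reopt: multiple: construct tour multiple depot: 14}, i.e.\ exactly when the algorithm follows a transport arc whose two endpoints lie in different tours. Each such event corresponds to traversing one arc of the tour connection graph $G^s$, so the sequence of values taken by $currTour$ describes a walk in~$G^s$ starting at the node representing~$\tourd^j$. Because every pickup node is marked visited once and afterwards never left along a transport arc again, each arc of~$G^s$ is used at most once, so this is an arc-simple walk. The key structural input is Lemma~\ref{lem: static: reopt: multiple: tour connection graph: connected thus strongly}\,\ref{lem: static: reopt: multiple: tour connection graph: connected thus strongly: 1}, giving $\abs{\delta^+(v^s)} = \abs{\delta^-(v^s)}$ for every node~$v^s$ of~$G^s$; this degree balance is what will force the walk to close up at its starting node.

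Next I would run a Hierholzer-type argument on this walk. While $currTour = i$, the algorithm keeps following tour arcs of~$\tourd^i$ and serving its still unvisited pickups; it leaves tour~$i$ precisely when it serves a pickup of~$\tourd^i$ whose delivery lies in another tour, i.e.\ along an outgoing $G^s$-arc of~$i$, and it can re-enter~$i$ only along an incoming $G^s$-arc. The \textbf{while}-loop stops only once the current tour has no unvisited pickup left, which in particular means that all outgoing $G^s$-arcs of the current node have been used. To make this precise I would maintain the invariant that at every intermediate moment the number of used incoming $G^s$-arcs equals the number of used outgoing $G^s$-arcs at every node except~$j$ (which carries one extra used outgoing arc, the start) and the current node (which carries one extra used incoming arc). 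By the balance from Lemma~\ref{lem: static: reopt: multiple: tour connection graph: connected thus strongly}\,\ref{lem: static: reopt: multiple: tour connection graph: connected thus strongly: 1}, whenever the walk enters a node $i \neq j$ there is still an unused outgoing arc available, so it can never become ``stuck'' at a node different from~$j$; hence the loop can terminate only with $currTour = j$.

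Finally I would verify that the traversal cannot stop at an intermediate depot node and is always well defined: at every action node the algorithm both enters and leaves (it always follows some tour arc or transport arc out of $currNode$), depot nodes are merely passed through inside the cyclic structure of each tour guaranteed by the backhaul assumption, and Lemma~\ref{lem: static: reopt: multiple: tour connection graph: connected thus strongly}\,\ref{lem: static: reopt: multiple: tour connection graph: connected thus strongly: 2} ensures that each connected component of the transport schedule graph is strongly connected, so ``following the current tour'' at the end indeed reaches its depot. I expect the main obstacle to be the bookkeeping of the third step: matching the algorithm's termination condition (``all pickups of $currTour$ visited'') with the notion of getting stuck in Hierholzer's argument and checking that the degree invariant is preserved across the interleaved tour-arc and transport-arc steps, since the walk may re-traverse the tour arcs of a single tour several times before all of its pickups have been served.
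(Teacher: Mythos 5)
Your proposal is correct and follows essentially the same route as the paper: both rest on Lemma~\ref{lem: static: reopt: multiple: tour connection graph: connected thus strongly}, i.e.\ on the degree balance $\abs{\delta^+(v^s)} = \abs{\delta^-(v^s)}$ in the tour connection graph~$G^s$, and conclude via an Eulerian-walk argument that the traversal must close up at its starting depot. The only difference is one of rigor, in your favor: the paper merely asserts that an Eulerian walk exists in $G^s$ and stops there, whereas you explicitly tie the algorithm's variable $currTour$ to a walk in $G^s$ and run the Hierholzer invariant to show the loop cannot terminate at a node other than~$j$.
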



\begin{proof}
In Algorithm~\ref{alg: static: reopt: multiple: construct tour multiple depot} we ``follow'' the tour arcs of a tour until we come to a pickup node that has not been visited before
(lines~\ref{alg: static: reopt: multiple: construct tour multiple depot: 8} and~\ref{alg: static: reopt: multiple: construct tour multiple depot: 16}).
If in the algorithm the current node is a non-visited pickup node, then the algorithm ``follows'' the transport
(lines~\ref{alg: static: reopt: multiple: construct tour multiple depot: 9}--\ref{alg: static: reopt: multiple: construct tour multiple depot: 14}).
From Lemma~\ref{lem: static: reopt: multiple: tour connection graph: connected thus strongly} we know that every connected component in~$G$ is also strongly connected.
Furthermore, the number of incoming and outgoing arcs for every node in $G^s$ are equal and, thus, there exists an Eulerian walk in $G^s$.
Therefore, it follows that the transportation schedule constructed in Algorithm~\ref{alg: static: reopt: multiple: construct tour multiple depot} is a transportation schedule with backhaul.
\end{proof}

Next, we show that the approximation factor $\capd + 1$ holds even in the case of multiple depots.

\begin{theorem}\label{thm: static: reopt: multiple: with backhaul: C plus 1}
For the Static Relocation Problem $(G, \VD, \z^0, \z^T, \zd, k, \capd)$ with multiple depots and with backhaul, and a complete graph $G$,
the algorithm \REOPT computes a non-preemptive transportation schedule and achieves an approximation factor of $\capd + 1$.
This approximation factor holds for the symmetric and asymmetric situation.
\end{theorem}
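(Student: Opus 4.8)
The plan is to mirror the single-depot argument of Theorem~\ref{thm: static: reopt: approximation factor: symmetric and asymmetric} almost verbatim, replacing single tours by transportation schedules and the construction of Algorithm~\ref{alg: static: reopt: construct tour} by its multi-tour variant Algorithm~\ref{alg: static: reopt: multiple: construct tour multiple depot}. That \REOPT\ outputs a non-preemptive schedule is already established, so only the factor remains, and as before it suffices to argue on uniform schedules. First I would fix an optimal schedule $\sched^*$ with backhaul for $(G, \VD, \z^0, \z^T, \zd, k, \capd)$, let $\TR^p$ be the transport requests induced by a minimal perfect $p$-matching, and let $\sched^p$ be an optimal schedule serving $\TR^p$. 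Running Algorithm~\ref{alg: static: reopt: multiple: construct tour multiple depot} yields a schedule $\overline{\sched}$, and I would establish the chain
\[
  \ell(\sched^*) \leq \ell(\sched^p) \leq \ell(\overline{\sched}),
\]
reducing the theorem to proving $\ell(\overline{\sched}) \leq (\capd + 1)\,\ell(\sched^*)$.

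For the request set I would take, tour by tour, the close distance uniform transport requests guaranteed by Lemma~\ref{lem: static: reopt: existence tour}, and let $\TR$ be their union over the $k$ tours of $\sched^*$; minimality of the $p$-matching then gives $\sum_{(v,w,1)\in\TR^p} d(v,w) \leq \sum_{(v,w,1)\in\TR} d(v,w)$. Next I would build the transport schedule graph $G^{tm}$ of Definition~\ref{def: static: reopt: multiple: transport graph}, run Algorithm~\ref{alg: static: reopt: multiple: construct tour multiple depot} to obtain $\overline{\sched}$, and invoke Lemma~\ref{lem: static: reopt: multiple: connected eq strongly} to certify that $\overline{\sched}$ is again a schedule with backhaul. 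On the schedule tour arcs $A$ I define the traverse counter $f^a$, the transport estimate $f^{\TR}$, and the load function $t$ exactly as in the single-depot case.

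The core step is to re-derive the two arc bounds, now per tour. Since $A = A^1 \cup \dots \cup A^k$ splits along the original tours and each such tour is a cycle, the local structure at every node is unchanged (each pickup node has one outgoing, each drop node one incoming transport arc), so Claim~\ref{lem: static: reopt: abs diff 1}, Claim~\ref{cor: static: reopt: leq C}, Lemma~\ref{lem: static: reopt: f leq c plus 1} and Lemma~\ref{cor: static: reopt: existence tour} apply tour by tour, giving $f^{\TR}(a) = t(a)$ and $f^a(a) + f^{\TR}(a) \leq \capd + 1$ for every $a \in A$. Plugging into Equation~\eqref{eq: static: reopt: estimate length all transport arcs: fTR} reproduces the single-depot estimate
\begin{align*}
  \ell(\overline{\sched}) &= \sum_{a \in A} f^a(a)\, w(a) + \sum_{a^t \in A^t} w(a^t)
     \leq \sum_{a \in A} \bigl( f^a(a) + f^{\TR}(a) \bigr)\, w(a) \\
     &\leq (\capd + 1) \sum_{a \in A} w(a) = (\capd + 1)\,\ell(\sched^*).
\end{align*}

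The hard part is exactly the per-tour re-derivation of $f^a(a) + f^{\TR}(a) \leq \capd + 1$, because Algorithm~\ref{alg: static: reopt: multiple: construct tour multiple depot} may wander between tours along transport arcs, so it is not a priori clear that the traverse counts on the arcs of a single tour still behave as in the one-tour analysis. The resolution is Lemma~\ref{lem: static: reopt: multiple: tour connection graph: connected thus strongly}~\ref{lem: static: reopt: multiple: tour connection graph: connected thus strongly: 2}: with backhaul every connected component of $G^{tm}$ is strongly connected and carries an Eulerian walk, which is what lets the construction return each driver to its own depot and keeps the convoy empty there, so that the relation $f^a(a) + t(a) = f^a(\text{depot arc})$ together with $f^a(\text{depot arc}) \leq \capd + 1$ survives within each tour. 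Finally, since $\overline{\sched}$ uses only schedule tour arcs and transport arcs — never a reversed arc — the estimate is insensitive to the symmetry of $d$, so the factor $\capd + 1$ holds in both the symmetric and the asymmetric case.
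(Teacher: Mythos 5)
Your proposal is correct and follows essentially the same route as the paper's proof: both reduce the multiple-depot case with backhaul to the single-depot machinery applied tour by tour of the optimal schedule, with Lemma~\ref{lem: static: reopt: multiple: tour connection graph: connected thus strongly} supplying the key structural fact that every departure from a tour at a pickup node eventually returns to that same tour at a drop node, so the per-tour traverse-count relations survive. The only difference is presentational: the paper makes this explicit by contracting each inter-tour excursion into an artificial transport request arc and then invoking Theorem~\ref{thm: static: reopt: approximation factor} per tour (via Remark~\ref{rem: static: reopt: arbitrary start node} to allow non-depot starting nodes), whereas you re-run the single-depot lemmas globally on $G^{tm}$; the underlying argument is the same.
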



\begin{proof}
In order to prove this statement, we show how the results of the previous section can be applied to the case when there are multiple depots.
For that we concentrate on an arbitrary tour from the optimal transportation schedule and show that we can apply the results from the previous section on this tour.
Since the tour is arbitrarily selected, the statement then follows.

Let $G^{tm}$ be a transport schedule graph for an optimal transportation schedule $\schedule$ and let $\TR^p$ be a set of transport requests computed from a minimal perfect $p$-matching.
Let $\tourd^\ell \in \schedule$ be an arbitrary tour.
We say that there is a \emph{transport request between tours}, if there exists a transport request arc $(v, w)$ in $G^{tm}$ so that $v \in \Vpick^i$ and $w \in \Vdrop^j$ with $i \neq j$.
In other words, if~$v$ corresponds to a pickup action in a tour~$\tourd^i$ and~$w$ corresponds to a drop action in tour~$\tourd^j$ with $i \neq j$.

As we did before, we construct tours~$\overline{\tourd}$ from the optimal transportation schedule and the set of transport requests.

We consider two different cases for this tour, when there are no transport requests in $\TR$ between this tour and another tour,
and when there are transport requests in $\TR$ between this tour and another tour.

Case 1 (there are no transport requests in $\TR$ between this tour and another tour):
in this case, we can directly apply Theorem~\ref{thm: static: reopt: approximation factor} on $\tourd^\ell$, showing the statement for this case.

Case 2 (there are transport requests in $\TR$ between this tour and another tour):
in this case, we cannot directly apply Theorem~\ref{thm: static: reopt: approximation factor} as we did in the previous case. 
However, due to Remark~\ref{rem: static: reopt: arbitrary start node}, it is not necessary to start the construction of a new tour within the depot but it can be used any arbitrary node within a transport graph.
Furthermore, all results from Section~\ref{sec: static: reopt: single depot} hold, when the construction has not been started in the depot.

From Lemma~\ref{lem: static: reopt: multiple: tour connection graph: connected thus strongly} it follows that whenever Algorithm~\ref{alg: static: reopt: multiple: construct tour multiple depot}
``leaves'' a tour at node~$v \in \Vpick^\ell$, it will eventually ``return'' to a node~$w \in \Vdrop^\ell$.
Hereby,~$w$ is the first ``entered'' node after the tour has been changed.

From the point of view of the tour~$\tourd^\ell$, the path from~$v$ to~$w$ in~$G^{tm}$ is like a transport request from~$v$ to~$w$.
Thus, we replace this path by an artificial transport request arc $(v, w)$.
By repeating this procedure for all nodes which are start or end nodes of a transport request between~$\tourd^\ell$ and another tour, we receive a transport graph~$G^\ell$.
On the constructed tour from~$G^\ell$ and the transport requests we apply Theorem~\ref{thm: static: reopt: approximation factor}.
Since the tour was arbitrarily selected, this proves the statement.
\end{proof}

Finally, we consider the Symmetric Static Relocation Problem with multiple depots and without backhaul.
For a first result, we show that the algorithm \REOPT achieves an approximation factor of at most $2 \cdot (\capd + 1)$.
In order to prove this approximation factor, we construct in an intermediate step a transportation schedule with backhaul $\overline{\schedule}$ from the optimal transportation schedule $\schedule^*$.
For every tour~$\tourd^* \in \schedule^*$ we construct a new tour~$\overline{\tourd}$ by adding an arc from the last action of $\tourd^*$ back to the ``starting'' depot (if they differ in the original tour).
Since we consider the symmetric situation, this arc can be estimated by the total tour length of $\tourd^*$.
Thus, the total tour length of this constructed transportation schedule is at most twice as large as the total tour length of the optimal transportation schedule.
Applying Theorem~\ref{thm: static: reopt: multiple: with backhaul: C plus 1} on $\overline{\schedule}$ then yields:

\begin{theorem}\label{thm: static: reopt: multiple: symmetric: without backhaul: 2 C plus 1}
For the Symmetric Static Relocation Problem $(G, \VD, \z^0, \z^T, \zd, k, \capd)$ with multiple depots and without backhaul, and a complete graph $G$,
the algorithm \REOPT computes a non-preemptive transportation schedule and achieves an approximation factor of $2 \cdot (\capd + 1)$.
\end{theorem}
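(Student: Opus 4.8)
The plan is to \emph{reduce} the without-backhaul problem to the with-backhaul problem, for which Theorem~\ref{thm: static: reopt: multiple: with backhaul: C plus 1} already supplies an approximation factor of $\capd+1$, and to pay only an extra factor of $2$ for forcing each driver back to its starting depot. The key observation, which \REOPT itself exploits, is that \REOPT always outputs a transportation schedule \emph{with} backhaul (its tours are initialized to start and end in the driver's depot and \PDPINSERT\ preserves this). Hence the \REOPT output $\schedule'$ on the instance $(G,\VD,\z^0,\z^T,\zd,k,\capd)$ is a feasible with-backhaul schedule, and it suffices to relate the optimal \emph{without}-backhaul schedule to the optimal \emph{with}-backhaul schedule.

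First I would fix an optimal non-preemptive without-backhaul schedule $\schedule^* = (\tourd^1,\dotsc,\tourd^k)$. Each tour $\tourd^j$ starts in the depot $v_D^j$ assigned to driver~$j$ by $\zd$ and ends in some, possibly different, depot $v_E^j \in \VD$. From $\schedule^*$ I construct a with-backhaul schedule $\overline{\schedule} = (\overline{\tourd}^1,\dotsc,\overline{\tourd}^k)$ by appending to each $\tourd^j$ one empty move from $v_E^j$ back to $v_D^j$ (a waiting move if $v_E^j = v_D^j$). Since this appended move carries no cars, $\overline{\schedule}$ serves exactly the same tasks as $\schedule^*$, it remains non-preemptive, and now every driver ends in its starting depot; thus $\overline{\schedule}$ is a \emph{feasible} with-backhaul schedule for the same instance.

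Next I would bound the length of $\overline{\schedule}$. Each tour $\tourd^j$ is a walk from $v_D^j$ to $v_E^j$, so the triangle inequality gives $d(v_D^j,v_E^j)\le \ell(\tourd^j)$, and by \emph{symmetry} the appended return move has length $d(v_E^j,v_D^j) = d(v_D^j,v_E^j)\le \ell(\tourd^j)$. Summing over all $k$ drivers yields $\ell(\overline{\schedule}) \le 2\,\ell(\schedule^*)$. In particular, writing $\schedule^{*,b}$ for an optimal with-backhaul schedule, feasibility of $\overline{\schedule}$ gives $\ell(\schedule^{*,b}) \le \ell(\overline{\schedule}) \le 2\,\ell(\schedule^*)$.

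Finally, since $\schedule'$ is a with-backhaul schedule, Theorem~\ref{thm: static: reopt: multiple: with backhaul: C plus 1} applied to the symmetric with-backhaul instance gives $\ell(\schedule') \le (\capd+1)\,\ell(\schedule^{*,b})$, and chaining the estimates produces
\[
  \ell(\schedule') \le (\capd+1)\,\ell(\schedule^{*,b}) \le (\capd+1)\,\ell(\overline{\schedule}) \le 2(\capd+1)\,\ell(\schedule^*),
\]
which is the claimed factor relative to the without-backhaul optimum $\ell(\schedule^*)$. The only place where symmetry is genuinely used — and hence the main (though mild) obstacle — is the estimate $d(v_E^j,v_D^j) = d(v_D^j,v_E^j)$ for the return arc: in the asymmetric case the reverse depot-to-depot distance is not controlled by $\ell(\tourd^j)$ (cf.\ Example~\ref{ex: static: reopt: multiple: asymmetric non possible}), which is precisely why this $2(\capd+1)$ guarantee is restricted to the symmetric situation.
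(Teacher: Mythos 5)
Your proposal is correct and follows essentially the same route as the paper: both construct an intermediate with-backhaul schedule $\overline{\schedule}$ from the without-backhaul optimum $\schedule^*$ by appending a return move to each tour, bound that move via the triangle inequality and symmetry by the tour's own length (giving the factor $2$), and then invoke Theorem~\ref{thm: static: reopt: multiple: with backhaul: C plus 1} to obtain the factor $\capd+1$. Your write-up is in fact slightly more explicit than the paper's, since you spell out the role of the with-backhaul optimum $\ell(\schedule^{*,b})$ in the chain of inequalities and note that \REOPT's output is itself a with-backhaul schedule, but the underlying argument is identical.
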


\section{Computational Results}
\label{seq: computational}

Both, the exact approach and the heuristic approach \REOPT, have been tested on randomly generated instances (with 20--80 over-/underfull stations,
50--100 stations in total, convoy capacities 5 and 10, and 10--30 drivers).
The stations are randomly distributed on a plane and the distances between two closest stations (w.r.t.~the Euclidean metric) are kept as rounded integers in the graph.
Hereby, we ensure that the graph is connected.
The time horizons are set to 100 in all test runs.
Note that the size of these instances corresponds to small car- or bikesharing systems or to clusters of larger systems, as in~\cite{SHH-2013}.

The algorithm ReOPT has been implemented in C++, and CPLEX v12.4 is used for solving ILPs.
The operating system is Linux (CentOS with kernel version 2.6.32).
The tests have been run on an Intel Xeon X5687 clocked at 3.60GHz, with 64 GB RAM.

For solving the integer linear program of the exact approach, we use Gurobi 5.6.
The test have been run on a Linux server (CentOS with kernel version 2.6.32) with 160 Intel Xeon CPUs E7-8870 clocked at 2.40GHz, with 1 TB RAM.
For the tests, we limited the number of threads to~32.
Since we could not find any feasible solution for the first instance after 40 hours, we rerun the solver on this instance with an increased time-limit of 160 hours.
After about 50 hours, a feasible solution has been found by Gurobi.
However, even after 160 hours, the optimal solution could not be found.
Due to the enormous runtime and the little to no gain, we did not rerun the solver with an increased time-limit on the remaining instances.
The solution found by the ILP solver has a total tour length of~$337$ units and the lower bound found by Gurobi is~$109$ (thus, the duality gap is approximately $67.7$\%).
Due to the enormous runtime of finding a solution with the exact approach, we did not continue with the other instances.
Furthermore, one can see, that the improvements of the solution of the ILP solver are very little compared to the extra computational time spend.

In order to compute a preemptive transportation schedule, it is not necessary to distinguish between each of the drivers.
Therefore, there are less variables within the integer linear program which models the preemptive situation than within the integer linear program which models a non-preemptive situation for the same instances.
Since the total tour lengths of transportation schedules with preemption give lower bounds for the total tour lengths of non-preemptive transportation schedules,
we state the lower bounds computed from non-preemptive transportation schedules.
The lower bounds from Table~\ref{tab: computational results} are taken from feasible preemptive transportation schedules, and the duality gaps are computed from these lower bounds.

ReOPT computes solutions within a reasonable time (in average less than 1.5 minutes for the smaller instances with at most 40 imbalanced stations,
less than 8 minutes for the middle sized instances with 60 imbalanced stations,
and about 17 minutes for the bigger instances with 80 imbalanced stations, see Table~\ref{tab: computational results}).

\begin{table}
 \centering
 \caption{
    This table shows the average computational results for several test sets of instances of the algorithm ReOPT in comparison to the found optimal value by solving the ILP (the time limit was set to 2h and 4h).
    The considered time horizon is 100 for all test instances.
    The algorithm ReOPT was run several times with different parameters for $N$ and $\Delta$.
    In this table, the following parameters and results are shown: the total amount of stations (1st column) and the number of overfull and underfull stations (2nd column).
    Hereby, the numbers in brackets are the number of overfull resp.~underfull stations.
    Furthermore, it shows the number of drivers $k$, the server capacity \capd, the average runtime in seconds of ReOPT,
    the total tour length found by ReOPT and by the ILP solver, and the average optimality gap ($(ttl_{ReOPT} - ttl_{ILP}) / {ttl_{ReOPT}}$).
 }
 \label{tab: computational results}
\begin{tabular}{c|c|c|c|c|c|c|c|c}
 stations & $\pm$stations & k  & \capd  & runtime (s) & ttl (ReOPT) & ttl (ILP 2h) & ttl (ILP 4h) & gap (\%)  \\ \hline
 50       & 20 (10/10)    & 10 & 5/10 & 2.5        & 354.25      & 330      & 330      & 7.35      \\
 50       & 40 (20/20)    & 20 & 5/10 & 72.33      & 469         & 435.75   & 432.5    & 7.74      \\
 100      & 60 (30/30)    & 30 & 5/10 & 451.5      & 665.67      & 526      & 521      & 16.73     \\
 100      & 80 (40/40)    & 30 & 5/10 & 981.92     & 658.33      & --       & --       & --
\end{tabular}
\end{table}

\section{Conclusion}
\label{seq: conclusion}

In this paper, we considered the Static Relocation Problem $(M, \z^0, \z^T, \zd, k, \capd)$,
where tours for $k$ drivers have to be computed in a (quasi) metric space $M$, where the maximal length of the tours must be smaller or equal to a given time horizon $T$.
Hereby, the drivers transfer cars between the stations by forming convoys of at most $\capd$ cars.
In order to have an exact solution we construct a time-expanded network $G_T$ from the original network $G$ and compute coupled flows (a car and a driver flow) on this network with an ILP.
Due to the coupling constraints, the constraint matrix of the network is not totally unimodular (as in the case of uncoupled flows),
reflecting that the problem is at least \NPhard.

Thus, we presented a heuristic approach to solve the Static Relocation Problem: the algorithm \REOPT.
The construction of the tours by \REOPT\ is as follows:
firstly, transport requests between ``overfull'' and ``underfull'' stations are generated by a perfect $p$-matching.
These transport requests serve as input for a Pickup and Delivery Problem, which is solved in the second step.
Finally, the tours are iteratively augmented by ``rematching'' certain origin/destination pairs, i.e., to reinsert accordingly adapted moves in such a way 
that the total tour length decreases.

The algorithm \REOPT\ has an approximation factor based on the given convoy sizes for the Symmetric Static Relocation Problem with and without backhaul
(Theorem~\ref{thm: static: reopt: approximation factor} and Theorem~\ref{thm: static: reopt: approximation factor: symmetric and asymmetric}).
In the asymmetric situation, the approach \REOPT\ can ensure a finite upper bound for the ratio of its solution to the optimal solution
only for the Asymmetric Static Relocation Problem with backhaul (Example~\ref{ex: static: reopt: multiple: asymmetric non possible} and Theorem~\ref{thm: static: reopt: multiple: symmetric: without backhaul: 2 C plus 1}).
The approximation factors for \REOPT\ in the different situations are summarized in Table~\ref{tab: conclusion: summary: approximation factors}.

\begin{table}[!htbp]
\centering
\caption{This table summarizes the approximation factors of \REOPT.
    All different situations which we consider in this paper are shown.
    In the case that there is only one depot, the tours with and without backhaul coincide (the reason why we marked two entries with a hyphon `-').}
\label{tab: conclusion: summary: approximation factors}
\begin{tabular}{c|c|c|c|c}
          &  \multicolumn{2}{c|}{with backhaul} & \multicolumn{2}{c}{without backhaul} \\ \cline{2-5}
          & symmetric   & asymmetric           & symmetric & asymmetric \\ \cline{2-5}
single    & $\capd + 1$ & $\capd + 1$          & - & - \\ \hline
multiple  & $\capd + 1$ & $\capd + 1$          & $2 (\capd + 1)$  & $\infty$ \\
\end{tabular}
\end{table}

Both approaches have been tested on randomly generated instances (with 20--80 over-/underfull stations, 50 and 100 stations in total, convoy capacities 5 and 10, and 10--30 drivers). 
The time horizon was set to 100 in all test runs.
Note that the size of these instances corresponds to small car- or bikesharing systems or to clusters of larger systems, as in \cite{SHH-2013}.
While the optimal solution could not be found even after 160 hours, the algorithm
ReOPT computes solutions within a reasonable time (in average less than 1.5 minutes for the smaller instances with at most 40 imbalanced stations,
less than 8 minutes for the middle sized instances with 60 imbalanced stations,
and about 17 minutes for the bigger instances with 80 imbalanced stations), and a reasonable gap to the solution computed by the ILP (see Table~\ref{tab: computational results}).
In order to be able to give at least a meaningful duality gap, we compared the solution with some lower bounds for preemptive transportation schedules, which can be computed in a shorter time.

There are several practical and theoretical open questions according to the Static Relocation Problem.
Improving the runtime and solutions of ReOPT is one goal.
Usually, every driver used gives additional costs.
Thus, it is desirable to know the minimal number of drivers needed in order to solve the Static Relocation Problem within the given time horizon.
To the best of our knowledge this is still an open question.
Due to the time horizon, it is possible that there does not exist a feasible solution for a given instance at all.
Having feasibility conditions is useful in two directions: to save unnecessary computation time for an algorithm and to generate test instances which can give feasible solutions.
Thus, finding feasibility conditions as well as lower bounds for the time horizon is another goal for the future.

\bibliographystyle{plain}


\begin{thebibliography}{10}

\bibitem{Ball+etal:handbook:95a}
Michael~O. Ball, Thomas~L. Magnanti, Clyde~L. Monma, and George~L. Nemhauser,
  editors.
\newblock {\em Network Models}, volume~7 of {\em Handbooks in Operations
  Research and Management Science}.
\newblock Elsevier Science B.V., Amsterdam, 1995.

\bibitem{Ball+etal:handbook:95b}
Michael~O. Ball, Thomas~L. Magnanti, Clyde~L. Monma, and George~L. Nemhauser,
  editors.
\newblock {\em Network Routing}, volume~8 of {\em Handbooks in Operations
  Research and Management Science}.
\newblock Elsevier Science B.V., Amsterdam, 1995.

\bibitem{Benchimol+etal:RAIRO}
Mike Benchimol, Pascal Benchimol, Beno\^{i}t Chappert, Arnaud de~la Taille,
  Fabien Laroche, Fr\'{e}d\'{e}ric Meunier, and Ludovic Robinet.
\newblock Balancing the stations of a self service "bike hire" system.
\newblock {\em RAIRO - Operations Research}, 45:37--61, 0 2011.

\bibitem{do-cmc2013}
Daniel Chemla, Fr\'{e}d\'{e}ric Meunier, and Roberto~Wolfler Calvo.
\newblock Bike sharing systems: {S}olving the static rebalancing problem.
\newblock pages 120--146, 2013.

\bibitem{cirrelt-CMR-2012}
Claudio Contardo, Catherine Morency, and Louis-Martin Rousseau.
\newblock Balancing a dynamic public bike-sharing system.
\newblock Technical Report~9, CIRRELT, 2012.
\newblock \url{https://www.cirrelt.ca/DocumentsTravail/CIRRELT-2012-09.pdf}.

\bibitem{EDGC:2012:PCS}
Madeleine EL-Zaher, Baudouin Dafflon, Franck Gechter, and Jean-Michel Contet.
\newblock Vehicle platoon control with multi-configuration ability.
\newblock {\em Proc.~Computer Science}, 9(0):1503--1512, 2012.

\bibitem{CIE:GKA-2013}
Fatma~Pinar Goksal, Ismail Karaoglan, and Fulya Altiparmak.
\newblock A hybrid discrete particle swarm optimization for vehicle routing
  problem with simultaneous pickup and delivery.
\newblock {\em Computers and Industrial Engineering}, 65(1):39--53, May 2013.

\bibitem{HKQWW:2015:ODY}
Pascal Halffmann, Sven~O. Krumke, Alain Quilliot, Annegret~K. Wagler, and
  Jan-Thierry Wegener.
\newblock On the online max-accept relocation problem.
\newblock {\em Procedings of 6th International Workshop on Freight
  Transportation and Logistics (ODYSSEUS 2015)}, 2015.
\newblock (to appear).

\bibitem{HKQWW:LAGOS:2015}
Pascal Halffmann, Sven~O. Krumke, Alain Quilliot, Annegret~K. Wagler, and
  Jan-Thierry Wegener.
\newblock On the online min-wait relocation problem.
\newblock {\em Procedings of LAGOS2015}, 2015.
\newblock (to appear).

\bibitem{LAGOS2013}
Sven~O. Krumke, Alain Quilliot, Annegret~K. Wagler, and Jan-Thierry Wegener.
\newblock Models and algorithms for carsharing systems and related problems.
\newblock {\em Electronic Notes in Discrete Mathematics}, 44(0):201 -- 206,
  2013.

\bibitem{KQWW:2014:LNCS}
Sven~O. Krumke, Alain Quilliot, Annegret~K. Wagler, and Jan-Thierry Wegener.
\newblock Relocation in carsharing systems using flows in time-expanded
  networks.
\newblock In Joachim Gudmundsson and Jyrki Katajainen, editors, {\em
  Experimental Algorithms}, volume 8504 of {\em LNCS}, pages 87--98. Springer,
  2014.

\bibitem{Nemhauser+etal:handbook:89}
George~L. Nemhauser, Alexander H.~G. {Rinnooy Kan}, and Michael~J. Todd,
  editors.
\newblock {\em Optimization}, volume~1 of {\em Handbooks in Operations Research
  and Management Science}.
\newblock Elsevier Science B.V., Amsterdam, 1989.

\bibitem{EvoCOP:HPHR-2013}
Marian Rainer-Harbach, Petrina Papazek, Bin Hu, and G{\"u}nther~R. Raidl.
\newblock Balancing bicycle sharing systems: A variable neighborhood search
  approach.
\newblock In {\em EvoCOP}, pages 121--132, 2013.

\bibitem{SHH-2013}
Jasper Schuijbroek, Robert Hampshire, and Willem-Jan van Hoeve.
\newblock Inventory rebalancing and vehicle routing in bike sharing systems.
\newblock 2013.
\newblock working paper.

\end{thebibliography}

\end{document}